\theoremstyle{plain}
\newtheorem{theorem}{Theorem}
\newtheorem{assumption}{Assumption}
\newtheorem{proposition}{Proposition}
\newtheorem{example}{Example}
\def\pr{ {f}}
\def\logit{\text{\rm logit }}
\def\expit{\text{\rm expit }}
\def\IF{{\rm  IF}}
\def\ipw{{\rm ipw}}
\def\aipw{{\rm aipw}}
\def\reg{{\rm reg}}
\def\dr{{\rm dr}}
\def\npr{{\rm np}}
\def\pa{{\rm par}}
\def\mar{{\rm mar }}
\def\M{{\mathcal M}}
\def\t{{ \mathrm{\scriptscriptstyle T} }}
\newcommand*{\addFileDependency}[1]{% argument=file name and extension
  \typeout{(#1)}
  \@addtofilelist{#1}
  \IfFileExists{#1}{}{\typeout{No file #1.}}
}
\newcommand*{\myexternaldocument}[1]{%
    \externaldocument{#1}%
    \addFileDependency{#1.tex}%
    \addFileDependency{#1.aux}%
}
\newcommand*{\ind}{%
	\mathbin{%
		\mathpalette{\@ind}{}%
	}%
}
\newcommand*{\nind}{%
	\mathbin{% % The final symbol is a binary math operator
		\mathpalette{\@ind}{\not}% \mathpalette helps for the adaptation
	}%
}
\newcommand*{\@ind}[2]{%
	\sbox0{$#1\perp\m@th$}% box 0 contains \perp symbol
	\sbox2{$#1=$}% box 2 for the height of =
	\sbox4{$#1\vcenter{}$}% box 4 for the height of the math axis
	\rlap{\copy0}% first \perp
	\dimen@=\dimexpr\ht2-\ht4-.2pt\relax
	\kern\dimen@
	{#2}%
	\kern\dimen@
	\copy0 % second \perp
} 
\begin{document}
%\def\spacingset#1{\renewcommand{\baselinestretch}%
%{#1}\small\normalsize} \spacingset{1}
%
%\spacingset{1} % DON'T change the spacing!

%\nocite{*}

%\title{\bf The role of callback in  survey data  for nonresponse adjustment}
\title{\bf A stableness of resistance model for nonresponse adjustment with callback data}
\date{}
\author{
Wang Miao\thanks{Co-corresponding authors: Wang Miao,  Department of Probability and Statistics, Peking University, 
5 Yiheyuan Road, Beijing 100871, China;  E-mail: mwfy@pku.edu.cn. Baoluo Sun,  Department of Statistics and Data Science, National University of Singapore, Singapore 119077; E-mail: stasb@nus.edu.sg}, 
\, 
Xinyu Li, \,
Ping Zhang,\\
Department of Probability and Statistics, Peking University, Beijing, China\\
and 
Baoluo Sun$^*$ \\Department of Statistics and Data Science, National University of Singapore, Singapore
}

\maketitle

%\pagenumbering{gobble} 

\begin{center}
\textbf{Summary}
\end{center}
Nonresponse   arises frequently   in surveys  and follow-ups   are  routinely made to  increase the response rate.
In order to monitor   the follow-up  process, callback data   have been  used in social sciences and survey studies   for decades. 
In modern surveys, the availability of callback data  is increasing   because the response rate is decreasing and  follow-ups  are essential to collect maximum information.
Although  callback data  are  helpful  to reduce the bias in surveys,  such data  have not been widely used  in statistical analysis until recently.
We propose a stableness of resistance assumption for nonresponse adjustment with callback data.
We establish the identification and the semiparametric efficiency theory under this assumption, and propose a suite of semiparametric estimation methods including doubly robust estimators, which generalize existing parametric approaches for  callback data analysis. 
We apply the approach to a Consumer Expenditure Survey dataset. 
The results  suggest an association between nonresponse and high housing expenditures.

%Nonresponse   arises frequently   in surveys and observational studies  and  leads to substantially biased statistical inference particularly when the  missingness mechanism is not at random.
%Nonetheless,   paradata that  track the data collection process are available in almost all modern surveys. 
%A traditional form of paradata is  callback data that record attempts to contact;
%they are   useful for nonresponse adjustment but  have not been used widely in statistical analysis until recently.
%In particular, there have been a few attempts that use callback data to estimate  response propensity scores,
%which rest on fully parametric  models and fairly stringent assumptions. 
%In this paper, we propose a   \textit{stableness of resistance} assumption  for nonresponse adjustment with callbacks.
%We establish   identification  and     semiparametric efficiency theory   without imposing any parametric restrictions, and propose a suite of semiparametric estimation methods including doubly robust ones, which generalize existing parametric approaches.
%We also consider extension of this framework to causal inference for  unmeasured confounding adjustment.
%Application to a Consumer Expenditure Survey dataset suggests  an association between nonresponse and high housing expenditures,
%and  reanalysis of \cite{card1995using}'s dataset   on the return to schooling     shows a smaller effect of education in the overall population  than in the respondents.

\vspace*{.2in}

\noindent\textsc{Keywords}: {Callback;   Doubly robust estimation; Missing data; Paradata;   Semiparametric efficiency.}

%\newpage

%\pagenumbering{arabic}

\section{Introduction}
Nonresponse often leads to substantially biased statistical inference and is   frequently encountered  in surveys and observational studies in many  areas of scientific research.
It has been a persistent concern of  statisticians and applied researchers  for many years.
The missingness is said to be  at random  (MAR)  or ignorable    if it does not depend on  the missing values conditional on fully-observed covariates, 
and otherwise it is called missing not at random (MNAR) or nonignorable.
A large body of work for nonresponse adjustment has been based on MAR.
However, there is  suspicion that the missingness mechanism is MNAR   in many  situations.
For example,  social stigma in sensitive  questions (e.g., HIV status, income, or drug use) makes nonresponse   dependent on unobserved variables. 
The missing data process  and the outcome distribution are identified under MAR, 
but under  MNAR  identification in general  fails to hold without extra information, which substantially jeopardizes statistical inference.
Identification means that the parameter or distribution of interest is uniquely determined from the observed-data distribution;
it is crucial for missing data analysis, without which statistical inference may be misleading and is of limited interest. 
Although one can achieve identification if the impact of the   outcome on the missingness is completely known,
this  should be used rather as a sensitivity analysis because in practice such information is seldom available, see e.g. \cite{robins2000sensitivity}.
Otherwise,  identification does not hold even for fully parametric models, except for several  fairly restrictive ones \citep[e.g.,][]{heckman1979sample,miao2016identifiability}.
For semiparametric and nonparametric models, identification and inference    under MNAR essentially require the use of auxiliary data, for example, 
 an instrumental variable    \citep{sun2018semiparametric,liu2020identification,tchetgen2017general} or    a shadow variable      \citep{d2010new,wang2014instrumental,miao2016varieties}.
Researchers have traditionally sought  such auxiliary variables   from the sampling frame, 
but it could be    difficult  in multipurpose studies where   multiple  survey variables are concerned and  multiple auxiliary variables are necessitated.
Moreover, such auxiliary variable methods  break down if  the auxiliary variables also have missing values  due to failure of contact in surveys.

In contrast to  the paucity of auxiliary variables  in     the sampling frame,  
callback data  offer an important  source of auxiliary information for nonresponse adjustment.
In the presence of nonresponse, the interviewer may  continue  to contact nonrespondents, 
and  the  contact process is recorded with callback data, sometimes called level-of-effort  data \citep{biemer2013using}.
For instance in the 2018 Consumer Expenditure Survey,  the maximum number of  contact attempts the interviewers made is about 30,
and  the number of calls made  to contact each unit  is recorded with the callback data.
Callback data have been widely used  in  epidemiological,  economic, social and political  surveys  for  a long time  since the 1940's.
Such  data are particularly useful    to monitor response rates and to study how  design features  affect  the data collection.
Examples  include \citet{politz1949attempt,filion1976exploring,drew1980modeling,lin1995using,potthoff1993correcting,wood2006using,jackson2010much,peress2010correcting,biemer2013using,clarsen2021revisiting}, and see \citet{groves1998nonresponse,olson2013paradata,kreuter2013improving} for a comprehensive review.
Although not all surveys   could provide callback data, their availability   is increasing in modern surveys.
The underlying  reason  is that the response rate in modern surveys is decreasing and  follow-ups are essentially required.
Examples include the National Health Interview Survey (NHIS),  the National Survey of Family Growth (NSFG), the National Survey on Drug Use and Health (NSDUH), 
the European Social Survey (ESS), the Behavioral Risk Factor Surveillance System (BRFSS), 
and the Consumer Expenditure Survey (CES).

However, callback data  have not been widely used  in statistical analysis until recently,
although  their usefulness  for nonresponse adjustment  has been recognized since \citet{politz1949attempt}.
The callback design is analogous to the two/multi-phase sampling \citep{deming1953probability} by viewing the  follow-ups as  the second-phase sample,   
but  it differs in that nonresponse  may still occur in the follow-up sample and is possibly not at random.
Hence, nonresponse adjustment remains difficult even if callbacks are available, due to the challenge for  identification and inference under MNAR.
In order to do nonresponse adjustment, the notion of  ``continuum of resistance''    assumes   that nonrespondents are more similar to delayed respondents than they are to early respondents,
so that the most reluctant  respondents are used to approximate the nonrespondents. 
This assumption has been   asserted in social and survey researches  for decades despite  conflicting evidence of its validity \citep{lin1995using,clarsen2021revisiting}.
Other approaches   include  modeling the joint likelihood of   callbacks and   frame variables \citep{biemer2013using};
Heckman-type models  \citep{chen2018generalization,zhang2018bayesian}; sensitivity analysis   \citep[Rotnitzky and Robins, 1995  unpublished manuscript;][]{daniels2015pattern}; etc.
Most notably, \cite{alho1990adjusting}, \cite{kim2014propensity}, \cite{qin2014semiparametric}, and \cite{guan2018semiparametric} use callback data to estimate response    propensity scores and propose  inverse probability weighted and empirical likelihood-based estimation methods,
where they impose  a fully parametric linear logistic propensity score model and     a common-slope assumption  that individual characteristics   influence the missingness process in the same way across the survey contacts.
These   previous approaches   rest on strong parametric models to achieve identification, which  circumvent the underlying sources for identification and limit their use in complex data application.

In order to  further investigate  the    usefulness of callback data and promote their application,
we take a   fundamentally nonparametric identification strategy, accompanied with practical parametric/semiparametric estimation methods.
Our contributions are threefold.
First, we characterize  a \textit{stableness of resistance} assumption and  establish the   identification    under this assumption, 
which is a nonparametric generalization of  the  parametric approach of \citet{alho1990adjusting}.
The stableness of resistance assumption    states that   the   impact  of the missing outcome  on  the  response propensity  is stable in the first two call attempts.
This assumption does not impose  parametric functional restrictions  on the propensity score, 
does not restrict the impact of covariates on the missingness, and admits any type (discrete or continuous) of variables. 
Second, under the stableness of resistance assumption we establish the semiparametric theory, which is for the first time in this field. 
We   characterize the tangent space, the efficient influence function and  the semiparametric efficiency bound for estimating  a general full-data functional. 
Third, we propose a suite of semiparametric estimation methods including an inverse probability weighted (IPW) estimator,   an outcome regression-based (REG) estimator, and a doubly robust (DR) one.
The proposed IPW estimator is a generalization of the calibration estimator of \cite{kim2014propensity} 
by allowing for nonlinear propensity score models and call-specific  effects of covariates on  nonresponse.
The IPW and REG estimators rest on correct specification of certain working models; otherwise, they are  no longer consistent.
However, the doubly robust estimator affords double protection against misspecification of working models: it remains consistent if   the  working models for either the IPW or the REG estimation is correct, but not necessarily both;
moreover, it  attains the efficiency bound for the nonparametric model when all working models are correct.
To  alleviate the concern about misspecification of parametric working models,
we further propose a more robust  approach  that uses more flexible working models to estimate nuisance functions and  plugs in the nuisance estimators into  the efficient influence function to obtain the estimator of the functional of interest.
This approach extends the methods for nonignorable missing data analysis by allowing for flexible estimation of the odds ratio function, instead of requiring it to be known or follow a parametric model.
We  show that the resultant estimator has second order bias, 
which affords more robust inference  provided that the nuisance estimators converge  sufficiently fast.
The characterization of  the bias involves a more sophisticated analysis rather than directly applying the second-order Taylor expansion technique.
This  technical contribution may shed some light on the analysis of complicated estimators.

The rest of this paper proceeds as follows. 
In Section 2, we further discuss the challenge for identification with callback data.
In Section 3, we characterize  the   stableness of resistance assumption  for using  callback data to identify the full-data distribution and establish the identification results.
In Sections 4 and 5,  we establish the semiparametric theory and describe   the IPW, REG, DR estimators, including one based on flexible working models.
We   show their asymptotic properties.
Section 6 includes extensions  to estimation of a general full-data functional and to the setting with multiple callbacks.
Sections 7 and 8 include numerical illustrations with   simulations and a real data application to  the Consumer Expenditure Survey \citep[CES;][]{ces2013measuring} dataset, respectively.
Section 9 concludes with  discussions on other extensions  and   limitations of the proposed approach.

%
%Having proven nonparametric identification, in Sections 4 and 5 we propose a suite of semiparametric estimators that in fact invoke parametric working models---this is similar to myriad other missing data problems where nonparametric identification is commonly followed by estimation via parametric or sometimes semiparametric models.

\section{Preliminaries and Challenges to Identification}
Throughout the paper, we let $(X,Y)$ denote the frame variables that are   released  in a survey or observational study for investigation of  a particular scientific purpose,
where  $X$ denotes a vector (possibly a null set) of fully observed covariates and $Y$ the  outcome prone to missing values.
The frame variables $X$ and $Y$ can be either  continuous or discrete, either univariate or multivariate.
Suppose in the data collection process  the interviewer has  continued  to contact a unit until he/she responds or the  maximum number of call attempts  $K$ is achieved.
We let $R_k$ denote the availability status  of $Y$ after the $k$th call, with $R_k=1$ if $Y$ is available  after the $k$th call and $R_k=0$ otherwise.
By this definition, we have $R_{k+1}\geq R_k$;  because if $R_k=1$, i.e. $Y$ is already obtained in the $k$th call, then by    definition $R_{k+1}=1\geq R_k$.
The observations of  $(R_1,\ldots, R_K)$ recording the response process in the survey  are known as callback data.
This setting differs from the classical monotone missingness pattern and should not be viewed as a longitudinal setting, 
because the callback data represent the availability status of exactly the same outcome in different call attempts, while the monotone missingness or longitudinal setting concerns the missingness of different outcomes.
Let $\pr$  denote a generic probability density or mass function.
The full data can be viewed as  $n$ independent and identically distributed  samples  from the   joint distribution $\pr(X,Y,R_1,\ldots, R_K)$, 
and in  the observed data the values of $Y$ with $R_K=0$ are missing.
Table \ref{tbl:callback} illustrates the data structure of a  survey with callbacks,
where the final availability  status $R_K$   can be created   based on the availability status of frame data but the callback data $(R_1,\ldots, R_K)$ contain much more information about the  response process than solely $R_K$  does.
To ground ideas, we will focus on the setting with two calls or when only the first two calls are considered  for analysis and then  discuss the extension to multiple calls in Section 6 and  the supplement.
Our primary goal is to make inference about the outcome mean $\mu=E(Y)$ based on the observed data and extension to a general functional of the full-data distribution $\pr(X,Y)$ will be considered in Section 6.

\begin{table}[H]
\centering
\caption{Data structure of a   survey with callbacks; NA denotes missing values}\label{tbl:callback}
\begin{tabular}{ccccccccccc}
%\toprule
\multicolumn{3}{c}{Frame data}  & &\multicolumn{4}{c}{Callbacks}\\
ID&$X$&$Y$&& $R_1$&$R_2$&$\ldots$&$R_K$\\
%\cmidrule{1-4}\cmidrule{6-9}
%\midrule
1&$x_1$ & $y_1$ &&1&1&$\cdots$&1\\
2&$x_2$& $y_2$ &&0&1&$\cdots$&1\\
$\colon$&$\colon$& NA& & 0&0&$\cdots$&0\\
$\colon$&$\colon$& NA& & 0&0&$\cdots$&0\\
$n$&$x_n$& $y_{n}$&&  0&0&$\cdots$&1\\
%\bottomrule
\end{tabular}
\end{table}

%
%Researchers have traditionally  used the frame data to  make nonresponse adjustment without engaging deeply with the callback data, in which case, only the final response status $R_K$ is used.
%Under MAR, i.e., $R_K\ind Y\mid X$,   the full-data distribution is identified with 
%$\pr(X,Y)=\pr(Y\mid X,R_K=1)\pr(X)$. 
%Otherwise under MNAR, i.e., $R_K\nind Y\mid X$,  $\pr(Y\mid X)$ is not equal to $\pr(Y\mid X,R_K=1)$.

Traditionally, only the final response status $R_K$ is used for missing data analysis.
Beyond $R_K$, callback data $(R_1,\ldots, R_{K-1})$   reflect the reluctance of units to respond and how the reluctance is related to the outcome, which is particularly  informative about the nonresponse process.
However, callback data still do not suffice for identification.
Consider a toy example of survey on a binary outcome $Y$ with two calls; the joint distribution $\pr(Y,R_1,R_2)$ contains six unknown parameters $\{\pr(Y,R_1=0,R_2),\pr(Y,R_1=1,R_2=1): Y=0,1 \text{ and } R_2=0,1\}$  with a constraint that they sum to one;
the observed-data distribution  is captured by $\pr(Y,R_1=1,R_2=1)$ and $\pr(Y,R_1=0, R_2=1)$  with $Y=0,1$, and only offers four constraints on the unknown parameters. 
These five constraints do not suffice to  identify $\pr(Y,R_1,R_2)$ without further restrictions.
To achieve identification with callback data, \cite{alho1990adjusting}, \cite{kim2014propensity}, and \cite{qin2014semiparametric} consider a common-slope linear logistic model   for the   propensity scores.
\begin{example}[A common-slope linear logistic model]\label{example:1}
\cite{alho1990adjusting} assumes that for $k=1,\ldots, K$, 
\begin{equation}\label{alho}
 \logit \pr(R_k=1\mid R_{k-1}=0, X,Y)= \alpha_{k0} + \alpha_{k1}X +\gamma_k Y,\quad
\text{with}\quad  \alpha_{k1}=\alpha_1,\quad \gamma_k=\gamma.
\end{equation}
\end{example}
Hereafter, we let $\logit x = \log\{x/(1-x)\}$  be the logit    transformation  of $x$ and   $\expit x = \exp(x)/\{1+\exp(x)\}$ its  inverse.
Example \ref{example:1}  is a   restrictive model: the   propensity score for each callback  follows 
a linear (in parameters) logistic model and        the effects of  both the covariate $X$ and the outcome $Y$ do not vary across  call attempts. 
\cite{daniels2015pattern}   mention an   identification result  that   allows for call-specific coefficients of $X$ but  requires constant coefficients of $Y$  across all call attempts. 
\cite{guan2018semiparametric} relax the common slope assumption by only  requiring    $\alpha_{11}=\alpha_{21}$ and  $\gamma_1=\gamma_{2}$.
This enables   identification with only two call attempts, but their identifying strategy based on a meticulous analysis of the logistic model only admits continuous covariates and outcome.
In order to make full use of callback data for nonresponse adjustment, it is   of great interest to establish identification for other types of models than the linear logistic model.
So far, however,  identification and inference of semiparametric and nonparametric propensity score models with callbacks are not available.

\section{Identification under a stableness of resistance assumption}
\subsection{The stableness of resistance assumption}

An all-important step for  identification under MNAR is  to characterize  the     degree to which the missingness departs from MAR.
The odds ratio function is a particularly useful measure widely used in the missing data literature to  characterize the  association between the outcome and response propensity \citep[e.g.][]{rotnitzky2001methods,osius2004association,chen2007semiparametric,kim2011semiparametric,miao2016varieties,franks2020flexible,malinsky2020semiparametric}.
We define the odds ratio functions for the response propensity in the  first and second calls as follows,
\begin{eqnarray}
\Gamma_1(X,Y) &=&\log \frac{\pr(R_1=1\mid X,Y)\pr(R_1=0\mid X,Y=0)}{\pr(R_1=0\mid X,Y)\pr(R_1=1\mid X,Y=0)}, \\
\Gamma_2(X,Y) &=& \log \frac{\pr(R_2=1\mid R_1=0,X,Y)\pr(R_2=0\mid R_1=0, X,Y=0)}{\pr(R_2=0\mid R_1=0, X,Y)\pr(R_2=1\mid R_1=0,X,Y=0)}. 
\end{eqnarray}
The odds ratio functions $\Gamma_1(X,Y)$ and  $\Gamma_2(X,Y)$ measure the impact of the outcome on the missingness   in the first and second calls, respectively, by  quantifying   the change in the odds of response  caused by a shift of $Y$ from a common reference  value  $Y=0$ to a specific level while controlling for covariates.
Probabilities $\pr(R_1=1\mid X,Y=0)$ and $\pr(R_2=1\mid R_1=0,X,Y=0)$ are referred to as  baseline propensity scores evaluated at the reference value $Y=0$.
Any other value within the support of $Y$ can  be chosen as the reference value.
Therefore, the odds ratio function is  a measure of the resistance to respond caused by the outcome.
A positive odds ratio function evaluated at $(x,y)$ indicates that for participants with $X = x$, those with $Y = y$ are more likely to respond than those with $Y = 0$, 
and a larger odds ratio means a stronger willingness to respond. 
The setting with odds ratio functions equal to zero corresponds to MAR where the outcome does not influence the  response. 
Our identification strategy rests on the following assumption.
\begin{assumption}\label{assump:odds}
\begin{itemize}
\item[(i)] Stableness of resistance: $\Gamma_1(X,Y)=\Gamma_2(X,Y)=\Gamma(X,Y)$ for some   $\Gamma(X,Y)$;
\item[(ii)] Positivity: $0<\pr(R_1=1\mid X,Y) <1$ and $0<\pr(R_2=1\mid R_1=0,X,Y) <1$ for all $(X,Y)$.
\end{itemize}
\end{assumption}
Condition (ii) is a standard positivity assumption   in missing data analysis, which ensures  sufficient  overlap between  nonrespondents and respondents  in each call attempt.
Condition (i) is the key identifying condition, which   reveals   that  the impact of the outcome on   the response propensity   remains the same in  the first two calls. 
Whether Condition  (i) holds or not does not depend on the choice of  the reference value in the definition of the odds ratio functions.
Although Assumption \ref{assump:odds} is   concerned with  the missingness process, it in fact imposes certain sophisticated restrictions on the outcome distribution.
This is because under MNAR the missingness process is not ancillary to the outcome distribution; 
see Lemma \ref{lemma:1} in the supplement for   the  apparent dependence between the missingness process, the outcome distribution and the odds ratio function.

Similar to various missing data problems, Assumption \ref{assump:odds}   is untestable based on observed data. 
Thus, the justification of  its validity   requires domain-specific knowledge and needs to be investigated on a case-by-case basis.
Assumption \ref{assump:odds} is motivated from the idea that the   individual characteristics may  influence the missingness process in  the same way across the survey contacts. 
Similar ideas  have been used since \cite{alho1990adjusting}, \cite{kim2014propensity}, \cite{qin2014semiparametric}, and \cite{guan2018semiparametric}.
Assumption \ref{assump:odds} is plausible if the nonignorable missingness  is  caused by certain social stigma  (or other issues such as  difficulty to reach)  and the social stigma attached to the units does not change during the first two calls or the  contacts are made in   a short period of time.
Besides,  sensitivity analysis can be applied to assess how results would change if Assumption  \ref{assump:odds} were to be violated; see the simulation and application in Sections 7 and 8, respectively.

We provide a  generative model  based on the discrete choice theory \citep{mcfadden2001economic,train2009discrete} to further illustrate  the intuition behind   Assumption \ref{assump:odds}. 
Suppose
\begin{eqnarray*}
	&&Y = \beta_0 + \beta_1X + C, \\
	&&U_{1} = \bar\alpha_{10} + \bar\alpha_{11}X + \gamma_1 C + \varepsilon_{1},\quad
	U_{2} = \bar\alpha_{20} + \bar\alpha_{21}X + \gamma_2 C + \varepsilon_{2},\\
	&&R_{1} = I(U_{1} > 0), \quad
	R_{2} = R_1 + (1-R_1)\times I(U_{2} > 0),
\end{eqnarray*}
where $X$ is the observed covariate and $C$ is an unmeasured factor (such as social stigma)  correlated with both the outcome and missingness mechanism.
Variables $U_{1}$ and $U_{2}$   are  the utilities (or net benefits) that a subject obtains from responding as opposed to not responding at the first and second calls, respectively, 
and $\varepsilon_{1}$ and $\varepsilon_{2}$ are independent error terms, both  following  a logistic distribution and   independent of $X$ and $C$. 
The generating process of $R_1,R_2$ reveals how the subjects decide to respond:
a subject will respond in the first call (i.e., $R_{1}=1$) if   utility $U_1>0$;
for those not responded in   the first call, he/she will respond at the second call (i.e., $R_1=0,R_{2}=1$) if    utility $U_2>0$.
We show in Section S1 of the supplement that the above  discrete choice model  induces the linear logistic propensity score  model:  
$\logit \pr(R_k=1\mid R_{k-1}=0, X,Y)= \alpha_{k0} + \alpha_{k1}X +\gamma_k Y$, 
with $\alpha_{k0} = \bar\alpha_{k0} - \gamma_k\beta_0$ and $\alpha_{k1} = \bar\alpha_{k1} - \gamma_k\beta_1$  for $k=1,2$.
In this model, the stableness of resistance stating that $\gamma_1=\gamma_2$ in fact  means the effects of the unmeasured factor $C$ on the utilities $U_1$ and   $U_2$ remain the same.
However, the stableness of resistance does not require $\bar\alpha_{11}$ and $\bar\alpha_{21}$  to be equal, which admits varying effects of the observed covariate $X$ on the utilities and  the response in different calls.

The biggest difference between Assumption \ref{assump:odds} and Model \eqref{alho} is that,
the former  is about social reality that researchers can justify based on domain-specific knowledge, 
whereas the latter further depends on   functional form restrictions that may lack real-world justification.
Besides, Assumption \ref{assump:odds}  has several major differences from Model \eqref{alho}, while including the latter as a special case.
Under Assumption \ref{assump:odds}  the propensity scores have the following representation:
\begin{eqnarray}
        \pr(R_1=1\mid X,Y) &=&\expit\{A_1(X) + \Gamma(X,Y)\},\label{propensity1}\\
        \pr(R_2=1\mid R_1=0,X,Y)&=&  \expit\{A_2(X) + \Gamma(X,Y)\}, \label{propensity2}
\end{eqnarray}
where $A_1(X)= \logit \pr(R_1=1\mid X,Y=0)$ and $A_2(X) = \logit \pr(R_2=1\mid R_1=0,X,Y=0)$ are the logit transformations of the corresponding  baseline propensity scores which  capture  impacts of covariates  on the missingness,
and $\Gamma(X,Y)$ is the common odds ratio function that measures the impact of the outcome on the missingness  in the first and second calls.
The functions $A_1(X), A_2(X), \Gamma(X,Y)$  are unrestricted.
Therefore, Assumption \ref{assump:odds}   admits nonlinear and nonparametric effects of covariates and missing outcomes on the missingness, 
which is a much larger, and in fact,  infinite-dimensional  model;
it accommodates any types of outcomes.
The assessment of   Assumption \ref{assump:odds} only concerns   the first two calls and does not depend on the number of callbacks, while the missingness in the remaining calls are left completely unrestricted;
it only requires  the effects of missing outcomes, captured by the odds ratio, to be stable, while admiting call-specific impacts of   covariates on nonresponse.
In contrast, Model \eqref{alho} is fully parametric  and the effects of covariates and missing outcomes on the missingness must be linear,
which is a  restrictive model indexed by several parameters.
Model \eqref{alho} is a special case  of Assumption \ref{assump:odds} where $A_1(X) = \alpha_{10} + \alpha_1^\t X$, $A_2(X) = \alpha_{20} + \alpha_1^\t X $, and $\Gamma(X,Y)=\gamma Y$ encode the effects of variables on the missingness with the regression coefficients.  
This is best suited for  continuous or binary outcomes, but may not be suitable   for other types of outcomes.
Model \eqref{alho}  imposes   parametric functional restrictions on  all contact attempts and  assumes  stable effects of all variables entering the model;
however, this  may not hold for the entire time range when $K$ is large or callback data are from a long period.
In short,  Model \eqref{alho}  is a  restrictive  model making strong  parametric functional restrictions beyond the parameter (e.g. the outcome mean) of interest.
Assumption \ref{assump:odds} makes much  less restrictive  assumptions  and has much greater flexibility.

\subsection{Nonparametric identification}

We will show  that Assumption \ref{assump:odds} suffices for identification of the joint distribution $\pr(X,Y,R_1,R_2)$.
We first briefly explain  why callback data are useful and how to leverage them for identification.
Consider identification of  $f(Y,R_1\mid X)$, which  hinges on   identification  of $f(Y, R_1=0\mid X)$, i.e., the  missing data distribution in the first  call.
In the presence of nonignorable missing data,  we need  to determine the selection bias, captured by 
\[\frac{f(Y,R_1=1\mid X)}{f(Y,R_1=0\mid  X)}=\frac{\pr(R_1=1\mid X,Y)}{\pr(R_1=0\mid X,Y)}.\]
With callback data, a natural idea for estimating this selection bias is to approximate   $f(Y,R_1=0\mid X)$ with $f(Y, R_1=0,R_2=1\mid X)$, where $f(Y, R_1=0,R_2=1\mid X)$  is the distribution of  the    data  obtained in the second call.
This is analogous to the two-phase sampling \citep{deming1953probability}.
However,  the challenge here is that    nonresponse    still occurs in the second call and is possibly not at random,
and as a result, $f(Y, R_1=0,R_2=1\mid X)$ is not necessarily equal to $f(Y, R_1=0\mid X)$ and this crude approximation does not suffice for identification.
Nonetheless, under   Assumption 1 we are able to  characterize  and can further identify the bias  of this approximation as shown in Proposition \ref{prop:idn} and Theorem \ref{thm:idn}  below.

\begin{proposition}\label{prop:idn}
Letting $D(X)=A_2(X)-A_1(X)$, then under Assumption \ref{assump:odds} we have that 
\begin{eqnarray*}
\frac{\pr(R_1=1\mid X,Y)}{\pr(R_1=0\mid X,Y)} &=&  \frac{\pr(Y, R_1=1\mid X)}{\pr(Y,R_2=1,R_1=0\mid X)} - \exp\{-D(X)\},\\
\frac{\pr(R_2=1\mid R_1=0,X,Y)}{\pr(R_2=0\mid R_1=0,X,Y)}  &=& \exp\{D(X)\}  \frac{\pr(Y, R_1=1\mid X)}{\pr(Y,R_2=1,R_1=0\mid X)} - 1.
\end{eqnarray*}
\end{proposition}
We prove this result in the supplement. 
Proposition \ref{prop:idn} also implies  an inequality:
\begin{equation}\label{restriction}
\frac{\pr(Y, R_1=1\mid X)}{ \pr(Y,R_2=1,R_1=0\mid X)}> \exp\{-D(X)\},
\end{equation}
i.e., given $X$ the density ratio on the left hand side is uniformly bounded from zero. 
This is a  restriction on the observed-data distribution imposed by  Assumption \ref{assump:odds}.

In  Proposition \ref{prop:idn}, $\pr(Y, R_1=1\mid X)/\pr(Y,R_2=1,R_1=0\mid X)$ is used to approximate $\pr(R_1=1\mid X,Y)/\pr(R_1=0\mid X,Y)$ and 
the approximation bias is captured by   $D(X)$.
We can further show that $D(X)$ is identified under Assumption \ref{assump:odds}, and thus the selection bias $\pr(R_1=1\mid X,Y)/\pr(R_1=0\mid X,Y)$ is identified,  and then  the identification of $f(Y,R_1\mid X)$ and $\pr(X,Y,R_1,R_2)$ is straightforward.

\begin{theorem}\label{thm:idn}
Under Assumption \ref{assump:odds}, $D(X)$ is identified, and as a result, $\pr(X,Y,R_1,R_2)$ is identified from the observed-data distribution.
\end{theorem}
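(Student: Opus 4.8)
The plan is to reconstruct $\pr(X,Y,R_1,R_2)$ from the observed-data distribution in three stages: first identify the baseline gap $D(X)=A_2(X)-A_1(X)$, then recover the two propensity scores from Proposition~\ref{prop:idn}, and finally back out $\pr(Y\mid X)$ and reassemble the joint law. To begin, note that because $Y$ is recorded exactly when $R_2=1$ and the callback condition (i) makes $R_1=1$ imply $R_2=1$, the observed data identify $\pr(X)$, the subdensities $\pr(Y,R_1=1\mid X)$ and $\pr(Y,R_1=0,R_2=1\mid X)$, and the probability $\pr(R_1=0,R_2=0\mid X)$; in particular the ratio
\begin{equation*}
\rho(X,Y)=\frac{\pr(Y,R_1=1\mid X)}{\pr(Y,R_1=0,R_2=1\mid X)}
\end{equation*}
appearing in Proposition~\ref{prop:idn} is identified, is finite and strictly positive on the support of $(X,Y)$ by positivity (ii), and obeys $\rho(X,Y)>\exp\{-D(X)\}$ by \eqref{restriction}. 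Conversely, by the callback condition and the factorization $\pr(X,Y,R_1,R_2)=\pr(X)\pr(Y\mid X)\pr(R_1\mid X,Y)\pr(R_2\mid R_1,X,Y)$ with $\pr(R_2=1\mid R_1=1,X,Y)=1$, it suffices to identify $\pr(Y\mid X)$, $\pr(R_1=1\mid X,Y)$, and $\pr(R_2=1\mid R_1=0,X,Y)$.

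The crux is identifying $D(X)$. Writing the two identities in Proposition~\ref{prop:idn} in probability rather than odds form gives, for a fixed $X$ and with $t=\exp\{-D(X)\}$,
\begin{equation*}
\pr(R_1=1\mid X,Y)=\frac{\rho(X,Y)-t}{1+\rho(X,Y)-t},\qquad \pr(R_2=1\mid R_1=0,X,Y)=1-\frac{t}{\rho(X,Y)},
\end{equation*}
which are genuine probabilities precisely when $0<t<\inf_Y\rho(X,Y)$, an interval whose interior contains the true $t$ by \eqref{restriction}. Imposing that $\pr(Y\mid X)=\pr(Y,R_1=1\mid X)/\pr(R_1=1\mid X,Y)$ integrates to one and substituting the first identity reduces, after elementary algebra, to the scalar equation
\begin{equation*}
\psi_X(t):=\int \frac{t\,\pr(Y,R_1=0,R_2=1\mid X)^{2}}{\pr(Y,R_1=1\mid X)-t\,\pr(Y,R_1=0,R_2=1\mid X)}\,dy=\pr(R_1=0,R_2=0\mid X),
\end{equation*}
whose right-hand side is strictly positive by positivity. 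On the admissible interval the integrand is nonnegative and, at every $Y$ with $\pr(Y,R_1=0,R_2=1\mid X)>0$, strictly increasing in $t$; since positivity forces $\pr(Y,R_1=0,R_2=1\mid X)>0$ on a set of positive probability, $\psi_X$ is strictly increasing wherever finite, with $\psi_X(0^+)=0$. The true $t$ lies in the interval and, since the true $\pr(Y\mid X)$ integrates to one, solves this equation; strict monotonicity then makes it the unique admissible solution, so $D(X)=-\log t$ is identified.

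It remains to finish. Substituting the identified $D(X)$ into the two displayed identities identifies $\pr(R_1=1\mid X,Y)$ and $\pr(R_2=1\mid R_1=0,X,Y)$; then $\pr(Y\mid X)=\pr(Y,R_1=1\mid X)/\pr(R_1=1\mid X,Y)$ is identified and, by the normalization enforced above, is a bona fide density; combining these with $\pr(X)$ and $\pr(R_2=1\mid R_1=1,X,Y)=1$ reconstructs $\pr(X,Y,R_1,R_2)$.

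I expect the only delicate point to be the scalar equation $\psi_X(t)=\pr(R_1=0,R_2=0\mid X)$. The pointwise monotonicity of its integrand in $t$ is a short derivative calculation, but one must be careful to (a) pin down the admissible range $(0,\inf_Y\rho(X,Y))$ for $t$ using \eqref{restriction}, (b) invoke positivity to exclude the degenerate cases $\pr(R_1=0,R_2=0\mid X)=0$ and $\pr(Y,R_1=0,R_2=1\mid X)\equiv 0$, which would either destroy strict monotonicity or push $D(X)$ to $\pm\infty$, and (c) handle integrability near the upper end of the admissible range --- which, however, is not needed for uniqueness once a root is known to exist in the interior.
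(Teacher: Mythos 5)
Your proof is correct and follows essentially the same route as the paper: both reduce the problem to showing that $D(X)$ is the unique root of a strictly monotone scalar equation obtained from Proposition \ref{prop:idn}, and your equation $\psi_X(t)=\pr(R_1=0,R_2=0\mid X)$ with $t=\exp\{-D(X)\}$ is algebraically equivalent to the paper's equation \eqref{idn3}, since your normalization $\int \pr(Y\mid X)\,dY=1$ and the paper's integration of the second-call inverse odds both amount to the constraint $\int \pr(Y,R_1=0\mid X)\,dY=\pr(R_1=0\mid X)$. The only cosmetic difference is the parameterization, which flips the monotonicity from decreasing in $D(x)$ to increasing in $t$.
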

\begin{proof}
Proposition \ref{prop:idn} implies   that
\begin{eqnarray}
\frac{\pr(R_2=0\mid R_1=0, X)}{\pr(R_2=1\mid R_1=0, X)} &=& \int \frac{\pr(R_2=0\mid R_1=0,X,Y)}{\pr(R_2=1\mid R_1=0,X,Y)} \pr(Y\mid R_2=1,R_1=0, X)dY \nonumber \\
 &=&\int  \left[ \frac{\exp\{D(X)\}  \cdot \pr(Y, R_1=1\mid X)}{\pr(Y,R_2=1,R_1=0\mid X)} - 1\right]^{-1} \pr(Y\mid R_2=1,R_1=0,X) dY   \nonumber\\
 &\equiv & L\{D(X)\}.  \label{idn3}
\end{eqnarray}
This is an equation with  $D(X)$ unknown  while all the other quantities   are available from the observed-data distribution.  
Identification of $D(X)$ can be assessed by checking uniqueness of  the solution to this equation.
For any fixed $x$ and any $D(x)$ such that \eqref{restriction} is satisfied,    $L\{D(x)\}$ is strictly decreasing in $D(x)$ because
\begin{eqnarray*}
\frac{\partial L\{D(x)\}}{\partial D(x)} = - \int \frac{\exp\{D(x)\}   \frac{\pr(Y, R_1=1\mid X=x)}{\pr(Y,R_2=1,R_1=0\mid X=x)}}{ \left[\exp\{D(x)\}  \frac{\pr(Y, R_1=1\mid X=x)}{\pr(Y,R_2=1,R_1=0\mid X=x)} - 1\right]^2} \pr(Y\mid R_2=1,R_1=0, X=x) dY <0.
\end{eqnarray*}
Therefore,  for any fixed $x$ the solution to \eqref{idn3} is unique. 
Applying this argument to all $x$,  then $D(X)$ is identified and $\pr(R_1=1\mid X,Y),\pr(R_2=1\mid R_1=0,X,Y)$ are identified according to Proposition \ref{prop:idn}.
Then it is straightforward to show that  $\pr(X,Y)$ and  $\pr(X,Y,R_1,R_2)$ are identified.
\end{proof}

We achieve nonparametric identification of   $\pr(X,Y,R_1,R_2)$  with callback data under the stableness of resistance  assumption.
The nonparametric identification elucidates the underlying source   for identification with callback data, other than invoking parametric functional restrictions; 
it extends the application of callback data and opens the way to  novel  estimation methods.
To our knowledge,  Assumption \ref{assump:odds} is so far the most parsimonious condition characterizing the most
flexible model for identification  with callbacks, 
and  Theorem \ref{thm:idn} is so far the most general  identification result.
Back to the binary outcome example, Assumption \ref{assump:odds} can be viewed as an additional constraint on the parameters so that we have sufficient number of constraints to identify the joint distribution;
see Section S1 in the supplement for the details about the parameter count interpretation of identification.
Note that  Assumption \ref{assump:odds} is sufficient but not necessary for identification of the joint distribution. 
There  exist different  conditions that can achieve identification.
For instance, in a similar spirit, it is of interest and  possible to establish identification with stableness of resistance on other scales of the propensity scores, say risk ratio scale, because it also introduces certain constraints on the parameters.
In principle, the union of all such conditions constitutes the if and only if condition for identification.
However, it may not be worth the chase because such an if and only if  condition will be very complicated with difficulty in interpretation and justification in practice, and is thus less preferred in contrast to a sufficient and meaningful condition like the one we propose in the paper.
Applying Theorem \ref{thm:idn} to the linear logistic model immediately gives  the following result.

\begin{proposition}\label{prop:idn2}
Assuming that 
$\logit \pr(R_k=1\mid R_{k-1}=0, X,Y)= \alpha_{k0} + \alpha_{k1} X +\gamma Y$ for $k=1,2$, 
then $\alpha_{k0},\alpha_{k1}$, and $\gamma$ are identified.
\end{proposition}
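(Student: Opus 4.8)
The plan is to obtain Proposition~\ref{prop:idn2} as an immediate corollary of Theorem~\ref{thm:idn}. \textbf{Step 1: the posited model is a special case of Assumption~\ref{assump:odds}.} Condition~(i) is automatic from the callback data structure, and condition~(ii) holds because each propensity score is the expit of a finite linear predictor and is therefore strictly between $0$ and $1$ on the support of $(X,Y)$. For condition~(iii), I would compute the odds ratio functions directly from the logistic form: since $\pr(R_k=1\mid R_{k-1}=0,X,Y)/\pr(R_k=0\mid R_{k-1}=0,X,Y)=\exp(\alpha_{k0}+\alpha_{k1}X+\gamma Y)$, plugging this into the definitions of $\Gamma_1$ and $\Gamma_2$ the baseline terms $\alpha_{k0}+\alpha_{k1}X$ cancel, leaving $\Gamma_1(X,Y)=\Gamma_2(X,Y)=\gamma Y$. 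Hence stableness of resistance holds with $\Gamma(X,Y)=\gamma Y$, $A_1(X)=\alpha_{10}+\alpha_{11}X$, and $A_2(X)=\alpha_{20}+\alpha_{21}X$, matching the representation \eqref{propensity1}--\eqref{propensity2}.

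\textbf{Step 2: invoke Theorem~\ref{thm:idn}.} Because Assumption~\ref{assump:odds} is satisfied, $\pr(X,Y,R_1,R_2)$ is identified from the observed-data distribution; in particular the two propensity scores $\pr(R_1=1\mid X,Y)$ and $\pr(R_2=1\mid R_1=0,X,Y)$, equivalently their logits, are identified as functions of $(X,Y)$ on the support of $(X,Y)$.

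\textbf{Step 3: read off the coefficients.} Under the model, $\logit\pr(R_1=1\mid X,Y)=\alpha_{10}+\alpha_{11}X+\gamma Y$ and $\logit\pr(R_2=1\mid R_1=0,X,Y)=\alpha_{20}+\alpha_{21}X+\gamma Y$ are affine in $(X,Y)$; since the left-hand sides are identified functions, the coefficients are uniquely determined under a mild non-degeneracy condition implicit in any such parametric specification (e.g., the support of $(X,Y)$ is not contained in an affine hyperplane). Concretely, evaluating the first logit at $Y=0$ identifies $A_1(X)=\alpha_{10}+\alpha_{11}X$, which pins down $\alpha_{10}$ and $\alpha_{11}$ by variation in $X$; then $\gamma=\{\logit\pr(R_1=1\mid X,Y)-A_1(X)\}/Y$ for any $Y\neq0$; and applying the same argument to the second-call logit identifies $\alpha_{20}$ and $\alpha_{21}$ via $A_2(X)$, the coefficient of $Y$ there coinciding with the already-identified $\gamma$ (an over-identifying consistency check).

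I do not expect a genuine obstacle here, since all the substance resides in Theorem~\ref{thm:idn}; the only points requiring a line of care are verifying positivity on the support and noting the harmless rank/support condition needed to recover affine coefficients from an identified affine function.
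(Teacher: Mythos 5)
Your proposal is correct and follows exactly the paper's route: the paper derives Proposition~\ref{prop:idn2} as an immediate corollary of Theorem~\ref{thm:idn} by noting the common-$\gamma$ logistic model satisfies Assumption~\ref{assump:odds} with $\Gamma(X,Y)=\gamma Y$ and call-specific affine $A_k(X)$. Your additional care about positivity and the non-degeneracy condition needed to recover affine coefficients from the identified logits is a reasonable, harmless elaboration of the same argument.
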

Proposition \ref{prop:idn2}  generalizes the identification of Model \eqref{alho}   by admitting   call-specific coefficients of $X$ in the propensity scores and only requiring the coefficients of $Y$ to be equal for the first two calls.
Model \eqref{alho} also  reveals a continuum-of-resistance model where
nonrespondents are more similar to   late respondents than to early ones due to a common odds ratio for all call attempts.
However, in the model in Proposition \ref{prop:idn2}   nonrespondents may depart further from late respondents than from early ones, depending on the form of    odds ratio functions in later calls.
In this case,  continuum-of-resistance models are not suitable.

For estimation, we can in principle first estimate   $\pr(Y\mid R_2=1, R_1=0, X)$, $\pr(Y,R_2=1, R_1=0\mid X), \pr(Y,R_1=1\mid X)$, and $\pr(R_2=1\mid R_1=0, X)$, then plug   them  into equation \eqref{idn3}
to solve for $D(X)$, and finally obtain the estimate of $\pr(X,Y,R_1,R_2)$ according to Proposition \ref{prop:idn}. 
The first step can be achieved by standard nonparametric estimation and the third step only involves basic 
arithmetic; 
however, solving equation \eqref{idn3} in the second step  is in general complicated.
%except for certain special cases.
%\begin{example}
%Suppose $Y$ is a binary outcome, then \eqref{idn3} simplifies to
%\begin{eqnarray*}
%a=\frac{c_1^2}{\exp\{D(X)\}b_1 - c_1} + \frac{c_0^2}{\exp\{D(X)\}b_0 - c_0},
%\end{eqnarray*}
%with $a=\pr(R_2=0,R_1=0\mid X)$, $b_y=\pr(Y=y,R_1=1 \mid X)$, and $c_y=\pr(Y=y,R_2=1,R_1=0\mid X)$ for $y =0,1$. 
%Converting this equation to the following  quadratic form:
%\begin{eqnarray*}
%ab_0b_1\exp\{2D(X)\}  -\exp\{D(X)\}(ab_1c_0+ab_0c_1 - b_0c_1^2 - b_1c_0^2) + ac_0c_1+c_0c_1^2+c_1c_0^2=0,
%\end{eqnarray*}
%then   $D(X)$ is identified by  the solution such that $\min(b_0/c_0,b_1/c_1)>\exp\{-D(X)\}$, according to \eqref{restriction}.
%\end{example}
%However, the order of   such an  equation increases  with  the number of levels of $Y$
%and solving for $D(X)$ becomes difficult  if  $Y$ has multiple levels and it will be extremely complicated if $Y$ is continuous.
In the next section, we consider a feasible parametrization for the joint distribution $\pr(X,Y,R_1,R_2)$ and develop  estimation methods.

\section{Inverse probability weighted and outcome regression-based estimation}
\subsection{Parametrization}
Under Assumption \ref{assump:odds}, 
we introduce the following factorization of the joint distribution as the basis for parametrization and estimation.
\begin{eqnarray}
\pr(Y,R_1,R_2\mid X)&=& c_1(X)\cdot  \pr(R_1\mid X,Y =0) \cdot \exp\{(R_1-1)\Gamma(X,Y)\} \cdot \pr(Y \mid R_2=1, R_1=0,X) \nonumber\\
&&\cdot   \{\pr(R_2\mid R_1=0,X,Y)\}^{1-R_1} \cdot \{\pr(R_2=0\mid R_1=0,X,Y)\}^{-1}, \label{factor1}\\
c_1(X) &=& \frac{\pr(R_1=1\mid X)}{\pr(R_1=1\mid X,Y=0)}\frac{1}{E\{1/\pr(R_2=0\mid R_1=0,X,Y)\mid R_2=1, R_1=0, X\}},\nonumber
\end{eqnarray}
where $c_1(X)$ is a  normalizing function of $X$ that makes the right hand side  of \eqref{factor1} a valid density function.
We prove \eqref{factor1} in Lemma \ref{lemma:1} in the supplement.
This factorization enables a convenient and congenial specification of four  components of the joint distribution:
\begin{itemize}
\item two  baseline propensity scores   $\pr(R_1=1\mid X,Y=0)$ and $\pr(R_2=1\mid R_1=0, X,Y=0)$;
\item the odds ratio function $\Gamma(X,Y)$;
\item and the   outcome distribution  for the second call $\pr(Y\mid R_2=1,R_1=0,X)$.
\end{itemize}
This kind of factorization of a joint density into  a combination of univariate conditionals and   odds ratio
is widely applicable; see  \cite{osius2004association}, \cite{chen2007semiparametric}, \cite{kim2011semiparametric},  and \cite{franks2020flexible} for examples in missing data analysis and causal inference. 
For notational convenience, we let   $\pi_1(X,Y)=\pr(R_1=1\mid X,Y)$ and $\pi_2(X,Y)=\pr(R_2=1\mid R_1=0,X,Y)$ denote the propensity scores,
$A_1(X)= \logit \pr(R_1=1\mid X,Y=0)$ and  $A_2(X) = \logit \pr(R_2=1\mid R_1=0,X,Y=0)$  the logit transformations of the corresponding baseline propensity scores, 
and $\pr_2(Y\mid X) =\pr(Y\mid R_2=1,R_1=0,X)$ the   outcome model for the second call.
Note that   $\pi_1(X,Y)$ and $\pi_2(X,Y)$ are determined by $A_1(X), A_2(X), \Gamma(X,Y)$ as in equations \eqref{propensity1}--\eqref{propensity2} and that the joint distribution is determined once given $A_1,A_2,\Gamma,f_2$.
We will write  $\pi_1, \pi_2, A_1, A_2, \pr_2,\Gamma$ for short where it does not cause confusion.

One may have concern  that  inequality \eqref{restriction} makes the range of $\pr(R_2 = 1\mid R_1=0,X,Y)$ dependent on the value of $\pr(R_1= 1\mid X,Y)$.
However,  we do not model these two propensity scores  separately,
and  we model  $A_1(X), A_2(X)$ and $\Gamma(X,Y)$ instead. 
We show in Lemma 1 in the supplement that 
these four components $A_1, A_2, \pr_2,\Gamma$ of the joint distribution are indeed variationally independent.
In the next, we consider semiparametric estimation under correct specification of a subset of these four models.
If one would like to consider a different   parametrization of the joint distribution,   
we refer to \citet{richardson2017modeling} about how to address
the issue of variational dependence.

\subsection{Inverse probability weighting}
We specify   parametric working models for two baseline propensity scores $A_1(X;\alpha_1)$, $A_2(X;\alpha_2)$, and the odds ratio function $\Gamma(X,Y;\gamma)$.
By definition, we require $\Gamma(X,Y=0;\gamma)=0$.
This  is equivalent to specifying propensity score models   $\pi_1(X,Y;\alpha_1,\gamma)$ and $\pi_2(X,Y;\alpha_2,\gamma)$.
The logistic model in Proposition \ref{prop:idn2} is   an example.
The following  equations   characterize the propensity scores,
\begin{eqnarray}
0&=&E\left\{\frac{R_1}{\pi_1}-1\mid X,Y\right\},\label{ipw0.1}\\
0&=&E\left\{\frac{R_2 - R_1}{ \pi_2} - (1-R_1) \mid X,Y\right\}, \label{ipw0.2}\\
0&=&E\left\{\frac{R_2 - R_1}{ \pi_2} - \frac{1-\pi_1}{\pi_1}R_1 \mid X,Y\right\}. \label{ipw0.3}
\end{eqnarray}
The first equation follows from the definition of $\pi_1$ and the other two   echo the definition of $\pi_2$  by noting that 
$E\left\{(R_2 - R_1)/\pi_2 - (1-R_1) \mid X,Y\right\}=E\{(1-R_1) (R_2/\pi_2 - 1)\mid X,Y\}=0$.
These three conditional moment equations motivate  the following marginal moment equations for estimating   $(\alpha_1,\alpha_2,\gamma)$:
\begin{eqnarray}
0&=&\hat E\left[\left\{\frac{R_1}{\pi_1(\alpha_1,\gamma)}-1\right\}  \cdot V_1(X)\right],\label{ipw1.1}\\
0&=&\hat E\left[\left\{\frac{R_2-R_1}{\pi_2(\alpha_2,\gamma)}-(1-R_1)\right\}\cdot V_2(X) \right],\label{ipw1.2}\\
0&=&\hat  E\left[\left\{\frac{R_2 - R_1}{ \pi_2(\alpha_2,\gamma)}  -  \frac{1-\pi_1(\alpha_1,\gamma)}{\pi_1(\alpha_1,\gamma)}R_1 \right\}\cdot U(X,Y)\right], \label{ipw1.3}
\end{eqnarray}
where $\hat E$ denotes the empirical mean operator and  $V_1(X)=\partial A_1(X;\alpha_1)/\partial \alpha_1$, $V_2(X)=\partial A_2(X;\alpha_2)/\partial \alpha_2$, $U(X,Y)=\partial \Gamma(X,Y;\gamma)/\partial \gamma$.
For instance, in the linear logistic model with $A_1(X;\alpha_1)=(1,X^\t)\alpha_1$, $A_2(X;\alpha_2)=(1,X^\t)\alpha_2$, $\Gamma(X,Y;\gamma)=Y \gamma$, 
one may use $V_1(X)=V_2(X)=(1,X^\t)^\t,U(X,Y)=Y$.
Note that   $V_1,V_2,U$ can be chosen as other user-specified functions; see   \citet[page 30]{tsiatis2006semiparametric}.

Equations  \eqref{ipw1.1}--\eqref{ipw1.3} only involve the observed data.
The generalized method of moments \citep{hansen1982large} can be implemented to solve these equations.
Letting $(\hat\alpha_{1,\ipw},\hat\alpha_{2,\ipw},\hat\gamma_\ipw)$ be the   nuisance estimators obtained from \eqref{ipw1.1}--\eqref{ipw1.3}, $\hat\pi_1=\pi_1(\hat\alpha_{1,\ipw},\hat\gamma_\ipw)$ and $\hat\pi_2=\pi_2(\hat\alpha_{2,\ipw},\hat\gamma_\ipw)$ the estimated propensity scores, and $\hat p_2 =  \hat\pi_1 +  \hat\pi_2 (1-  \hat\pi_1 )$ an estimator of $p_2=\pr(R_2=1\mid X,Y)$,
we propose the following  IPW estimator of the outcome mean,
\begin{eqnarray}
\hat \mu_\ipw = \hat E\left\{\frac{R_2}{\hat p_2 } Y\right\}. \label{esti:ipw1}
\end{eqnarray}

Preceding our proposal, \cite{kim2014propensity} developed a calibration estimator   under the  common-slope logistic model \eqref{alho}.
Our IPW estimator can be viewed as a generalization of the calibration estimator,
which can  in fact  be obtained from our IPW estimator by  choosing   appropriate functions $V_1,V_2,U$.
However, the calibration   estimator  may be biased  if the slopes of covariates $X$ differ in the two propensity score models, 
while our IPW estimator  works in this case.
We include a numerical simulation comparing these two estimators in Section S6.1 of the supplement.
Under the common-slope linear logistic model, \cite{qin2014semiparametric} and \cite{guan2018semiparametric} developed  empirical likelihood-based  estimation that is convenient to   incorporate auxiliary information to achieve  higher efficiency; it is of   interest to extend their approach to the setting with call-specific  slopes for covariates.

\subsection{Outcome regression-based estimation}
Alternatively,   the outcome mean can be obtained by estimation or imputation   of the missing values.
We specify and fit working models $\pi_1(\alpha_1,\gamma), \pr_2(Y\mid X;\beta)$ for    the first-call propensity score $\pr(R_1=1\mid X,Y)$ and the second-call outcome   distribution $\pr(Y\mid X,R_2=1,R_1=0)$, 
and  impute    the missing values  with
\begin{eqnarray}
\pr(Y\mid X,R_2=0) &=&  \frac{\exp(-\Gamma)\pr(Y\mid X,R_2=1,R_1=0)}{E\{ \exp(-\Gamma)\mid X, R_2=1,R_1=0\}}, \label{tilting} 
\end{eqnarray}
which is known as the exponential tilting  or Tukey’s representation \citep{kim2011semiparametric,vansteelandt2007estimation,franks2020flexible}.
To obtain  the nuisance  estimators $(\hat \beta, \hat \alpha_{1,\reg},\hat \gamma_\reg)$, we solve 
\begin{eqnarray}
0 &=& \hat{E} \left\{ (R_2-R_1) \cdot \frac{ \partial \log f_2(Y\mid X;\beta)}{\partial \beta} \right\},\label{reg1.1} \\
0 &=& \hat E\left[ \left\{\frac{R_1}{\pi_1(\alpha_1,\gamma)} -R_2\right\} U(X,Y) - (1-R_2) E\{U(X,Y) \mid X, R_2=0;\beta,\gamma\}  \right],\label{reg1.2}
\end{eqnarray} 
where $E(\cdot  \mid X, R_2=0;\beta,\gamma)$ is evaluated according to \eqref{tilting} and $U(X,Y)=\{\partial A_1(X;\alpha_1)/\partial \alpha_1,\partial \Gamma(X,Y;\gamma)/\partial \gamma\}$. 
Note that   $U(X,Y)$ can be chosen as other user-specified functions; see      \citet[page 30]{tsiatis2006semiparametric}.
Equation \eqref{reg1.1} is  the score equation of $\pr_2(Y\mid X;\beta)$.
Equation \eqref{reg1.2}  is motivated by  the  equation  $\hat E[(1-R_2)\{U(X,Y) -E\{U(X,Y)\mid R_2=0,X\}\}]=0$,
but  the  evaluation of  $\hat E\{U(X,Y)\}$   is untenable due to missing values of $Y$,  
and thus we replace with $\hat E\{R_1 U(X,Y)/\pi_1\}$.
An   estimator of the outcome mean by  imputing the missing outcome values  with $E(Y\mid   X,R_2=0;\hat\beta,\hat\gamma_\reg)$ is
\begin{equation}\label{esti:reg1}
\hat \mu_\reg =\hat E \{R_2 Y + (1-R_2)  E(Y\mid X,R_2=0;\hat\beta,\hat\gamma_\reg)\}.
\end{equation}
We refer to this approach as the outcome regression-based  (REG) estimation, 
although the  first-call propensity score model is also involved.

\cite{guan2018semiparametric} have previously proposed an estimator that involves the full-data outcome regression $E(Y\mid X)$  whereas our REG estimation   involves $E(Y\mid R_2=1,R_1=0,X)$ that
only concerns the observed data  and  enjoys the ease for model specification and estimation.
Moreover,     estimation of their outcome regression model requires    estimating the propensity scores for all call attempts, but  our REG estimation    only requires estimating   the first-call propensity score.

Note that \eqref{ipw1.1}--\eqref{esti:ipw1} are unbiased estimating equations for $(\alpha_1,\alpha_2,\gamma,\mu)$ in   model
\[\M_\ipw=\{\pr(X,Y,R_1,R_2): 
\text{ Assumption \ref{assump:odds} holds; } \pi_1(\alpha_1,\gamma) \text{ and  }\pi_2(\alpha_2,\gamma) \text{ are correct}\},\] 
and  \eqref{reg1.1}--\eqref{esti:reg1} are unbiased estimating equations for $(\beta,\alpha_1,\gamma,\mu)$  in model
\[\M_\reg=\{\pr(X,Y,R_1,R_2): 
\text{ Assumption \ref{assump:odds} holds; } \pi_1(\alpha_1,\gamma) \text{ and  }\pr_2(Y\mid X;\beta) \text{ are correct}\}.\] 
As a consequence,  consistency and asymptotic normality  of $(\hat\alpha_{1,\ipw},\hat\alpha_{2,\ipw},\hat\gamma_\ipw,\hat\mu_\ipw)$ in model $\M_\ipw$ and of $(\hat\beta, \hat\alpha_{1,\reg}, \hat\gamma_\reg,\hat\mu_\reg)$ in  model $\M_\reg$ can be established under standard regularity conditions \citep[see e.g.,][]{newey1994large} by following the  theory of estimating equations,  which we will not replicate here.
The choice of user-specified functions $V_1(X),V_2(X),U(X,Y)$ depends on the working models and influences the efficiency of the   estimators.
In principle,  the optimal choice can be obtained  by deriving the  efficient influence function for the nuisance parameters in models $\M_\ipw$ and $\M_\reg$, respectively.
However,  the potential prize of attempting to attain  local efficiency for nuisance parameters may not always be worth the chase because it depends on correct modeling of additional components of the  joint distribution beyond the working models, 
and as pointed out by \cite{stephens2014locally} such additional modeling efforts    seldom deliver the anticipated efficiency gain. 
Besides,    consistency of the estimators is undermined  if the required working models are incorrect.
Therefore, we  next propose       a doubly robust  and locally efficient approach.

\section{Semiparametric   theory and doubly robust estimation}
\subsection{Semiparametric theory}

We consider the  model $\M_\npr$ characterized by Assumption \ref{assump:odds},
\begin{equation*}
\M_\npr=\{\pr(X,Y,R_1,R_2): 
\text{ Assumption \ref{assump:odds} holds;   $A_1(X), A_2(X),\Gamma(X,Y),\pr_2(Y\mid X)$ are unrestricted.}
\}
\end{equation*}
Although the stableness of resistance assumption does impose an inequality constraint \eqref{restriction} on the observed-data distribution, we refer to $\M_\npr$  as the  (locally) nonparametric model because no parametric models are imposed in  $\M_\npr$ and as   established in the following,  the observed-data tangent space under  $\M_\npr$ is   the entire Hilbert space.
We aim to  derive the set of influence functions   for all regular and asymptotically linear (RAL) estimators of $\mu$ and  the efficient influence function under   $\M_\npr$.
To achieve this goal, the primary step is to  derive the observed-data tangent space for model $\M_\npr$.
Hereafter, we   let $O=(X,Y,R_1,R_2)$ if $R_2=1$ and $O=(X,R_1,R_2)$ if $R_2=0$   denote the observed data.
\begin{proposition}\label{prop:tangent}
The observed-data tangent space for $\M_\npr$ is 
\begin{eqnarray*}
\mathcal{T}=\{h(O)=R_1h_1(X,Y)+R_2h_2(X,Y)+h_3(X) ; E\{h(O)\}=0, E\{h^2(O)\}<\infty  \},
\end{eqnarray*}
where $h_1(X,Y)$, $h_2(X,Y)$ and $h_3(X)$ are arbitrary measurable and square-integrable functions.
\end{proposition}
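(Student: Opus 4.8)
The plan is to identify the tangent space by computing the score functions corresponding to smooth one-dimensional submodels of $\M_\npr$, taking the $L_2$-closure of their span. Because Assumption \ref{assump:odds} leaves the four components $A_1(X)$, $A_2(X)$, $\Gamma(X,Y)$, and $\pr_2(Y\mid X)$ completely unrestricted (and they are variationally independent by the factorization \eqref{factor1}), I would parameterize a path through the truth by perturbing each component separately and summing the resulting scores. First I would fix a regular parametric submodel indexed by $t$, writing the perturbed baseline propensity scores as $\expit\{A_1(X)+t a_1(X)\}$ and $\expit\{A_2(X)+t a_2(X)\}$, the perturbed odds ratio as $\Gamma(X,Y)+t g(X,Y)$ with $g(X,0)=0$, the perturbed marginal $\pr(X)$ tilted by $t m(X)$ with $E\{m(X)\}=0$, and the perturbed second-call outcome law $\pr_2(Y\mid X)$ tilted by $t r(X,Y)$ with $E\{r(X,Y)\mid R_2=1,R_1=0,X\}=0$. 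The functions $a_1,a_2,g,m,r$ are otherwise arbitrary square-integrable directions, so their scores generate the tangent space.

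Next I would differentiate $\log$ of the observed-data likelihood along this path at $t=0$. Using the factorization \eqref{factor1} of $\pr(Y,R_1,R_2\mid X)$ together with $\pr(X)$, the log-likelihood splits into a sum of terms, one per component, and each term contributes a score. Perturbing $\pr(X)$ yields a score $m(X)$ depending on $X$ only. Perturbing $A_1$ through $\pr(R_1\mid X,Y=0)$ and the normalizer $c_1(X)$ produces a term of the form $\{R_1-\pi_1(X,0)\}a_1(X)$ plus an $X$-only piece absorbed by $c_1$; more carefully, I would track how $c_1(X)$ depends on each perturbed component so that its contribution is a clean function of $X$. Perturbing $A_2$ gives, via the factors $\{\pr(R_2\mid R_1=0,X,Y)\}^{1-R_1}\{\pr(R_2=0\mid R_1=0,X,Y)\}^{-1}$ and $c_1$, a term that, after collecting the conditional-mean corrections, takes the shape $R_2 s_2(X,Y)+ (\text{function of }X)$. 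Perturbing $\Gamma$ through $\exp\{(R_1-1)\Gamma\}$, through both propensity scores, and through $c_1$ produces $(R_1-1)g(X,Y)$ plus additional $R_1$- and $R_2$-indexed pieces. Finally perturbing $\pr_2(Y\mid X)$ gives $(R_2-R_1)r(X,Y)$ plus its normalizing correction inside $c_1$.

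The key claim is then that the resulting scores, as $a_1,a_2,g,m,r$ range over their respective spaces, span (and have $L_2$-closure equal to) exactly the set $\mathcal{T}=\{R_1h_1(X,Y)+R_2h_2(X,Y)+h_3(X): E(h)=0,\ E(h^2)<\infty\}$. One inclusion is immediate: every score produced above is visibly of this additive form, so the tangent space is contained in $\mathcal{T}$. For the reverse inclusion I would argue that, given arbitrary mean-zero square-integrable $h_1,h_2,h_3$, one can choose the perturbation directions so that the total score equals $R_1h_1+R_2h_2+h_3$ up to the overall mean-zero constraint: the direction $r$ recovers (together with a $g$-piece) any function multiplied by $R_2-R_1$, the directions $g$ and $a_2$ handle the $R_1$-only and the additional $R_2$-only content, and $m$ together with the accumulated $c_1$-corrections supplies the $X$-only part $h_3$; one solves a triangular linear system for the directions in terms of $(h_1,h_2,h_3)$. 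I expect the main obstacle to be bookkeeping the contributions of the normalizer $c_1(X)$: since $c_1$ depends on $A_1$, $A_2$, $\Gamma$, and $\pr_2$ simultaneously, each perturbation generates an extra $X$-measurable score term, and one must verify that these extra terms neither obstruct surjectivity onto the $h_3(X)$ part nor leak into the $R_1h_1+R_2h_2$ part — i.e., that after collecting everything the map from perturbation directions to $(h_1,h_2,h_3)$ is onto the whole constrained Hilbert space. Handling this cleanly is exactly where the variational-independence of the four components (Lemma \ref{lemma:1}) does the work, so that $\mathcal{T}$ is the entire mean-zero $L_2$ space and $\M_\npr$ is locally nonparametric.
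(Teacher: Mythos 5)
Your overall route---perturb the variationally independent components $\{\pr(X),A_1,A_2,\Gamma,\pr_2\}$ of the factorization \eqref{factor1} along one-dimensional submodels, read off the induced observed-data scores, and identify their closed linear span---is the standard one and is essentially the computation the paper carries out. Two remarks sharpen it. First, the ``easy'' inclusion is vacuous rather than something to verify: because $R_2\geq R_1$ and $Y$ is observed exactly when $R_2=1$, \emph{every} square-integrable observed-data function $R_1a(X,Y)+(R_2-R_1)b(X,Y)+(1-R_2)c(X)$ rewrites as $R_1\{a(X,Y)-b(X,Y)\}+R_2\{b(X,Y)-c(X)\}+c(X)$, so $\T$ is literally the whole mean-zero Hilbert space of observed-data functions and the proposition is precisely the claim that the model is locally saturated. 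All of the content therefore sits in the spanning direction, which is the step you leave as a sketch.

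Second, the way you propose to close that step does not go through as stated. Variational independence of the four full-data components is not what makes the observed-data tangent space saturated: shadow-variable and instrumental-variable MNAR models admit the same kind of variationally independent odds-ratio factorization yet have \emph{proper} observed-data tangent spaces (which is exactly what produces their testable restrictions). What actually closes the argument is that the map from $(A_1,A_2,\Gamma,\pr_2)$ to the conditional observed-data law given $X$ is locally onto the set cut out by the \emph{open} inequality \eqref{restriction}, i.e., that the derivative of the score map is surjective. Concretely, after the cancellations you allude to, the $\Gamma$-direction contributes $R_1\pi_2(X,Y)g(X,Y)$ plus functions of $(X,R_2)$ (the $g$-terms cancel exactly on the cell $\{R_1=0,R_2=1\}$), which together with the $A_1$-direction and positivity of $\pi_2$ yields every $R_1h_1(X,Y)$ modulo functions of $(X,R_2)$; the $\pr_2$-direction yields $R_2r(X,Y)$ for every $r$ with $E(r\mid R_1=0,R_2=1,X)=0$; and what remains is to show that the residual, per-$x$ finite-dimensional space of functions of $(X,R_2)$ is also hit by the normalizer and $A_2$ contributions. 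That last step is a Jacobian nondegeneracy condition at each $x$, and it is where the strict monotonicity of $L\{D(x)\}$ from the proof of Theorem \ref{thm:idn} re-enters. Your proposal flags the $c_1(X)$ bookkeeping as the obstacle but does not supply this argument, so as written the proof is incomplete at its essential point.
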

This   proposition  states that the observed-data tangent space for $\M_\npr$ is   the entire Hilbert space   of  observed-data functions with mean zero and finite variance,  equipped with the usual inner product.
Hence,    the stableness of resistance assumption does not impose any local restriction on the observed-data  distribution.
As a result, there exits a unique influence function for $\mu$ in model $\M_\npr$, which must be the efficient one.
We have derived  the closed form for the efficient influence function. 
\begin{theorem}\label{thm:eif}
The efficient influence function for $\mu$ in the nonparametric model $\M_\npr$ is
\begin{equation*}\label{eif1}
\IF(O;\mu)=\left\{\frac{R_2-R_1}{\pi_2^2} - \frac{1-\pi_1}{\pi_1}\frac{R_1}{\pi_2}+ \frac{R_1}{\pi_1}  \right\}  \left\{Y -  \frac{E(Y/\pi_2  \mid X,R_2=0)}{E(1 /\pi_2\mid X,R_2=0)} \right\}  
+  \frac{E(Y/\pi_2 \mid X,R_2=0)}{E(1/\pi_2\mid X,R_2=0)} - \mu.
\end{equation*}
\end{theorem}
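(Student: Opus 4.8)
The plan is to exploit Proposition~\ref{prop:tangent}: since the observed-data tangent space of $\M_\npr$ is the entire Hilbert space of mean-zero, square-integrable functions of $O$, the functional $\mu$ possesses a \emph{unique} influence function, which is therefore automatically the efficient one, and no projection step is needed. It thus suffices to verify that the displayed $\IF(O;\mu)$ is a valid gradient of $\mu$, i.e.\ (a) it is a function of the observed data with $E\{\IF(O;\mu)\}=0$; (b) $E\{\IF(O;\mu)^2\}<\infty$; and (c) for every regular one-dimensional parametric submodel $\{P_t\}\subset\M_\npr$ through the truth at $t=0$ with score $S(O)=\partial_t\log dP_t(O)|_{t=0}$, the pathwise-derivative identity $\partial_t\mu(P_t)|_{t=0}=E\{\IF(O;\mu)\,S(O)\}$ holds.

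Write $W=\frac{R_2-R_1}{\pi_2^2}-\frac{1-\pi_1}{\pi_1}\frac{R_1}{\pi_2}+\frac{R_1}{\pi_1}$ and $m(X)=E(Y/\pi_2\mid X,R_2=0)/E(1/\pi_2\mid X,R_2=0)$, so that $\IF=W\{Y-m(X)\}+m(X)-\mu$. For (a)--(b): because $R_1\le R_2$, the event $\{R_2=0\}$ forces $R_1=0$ and hence $W=0$ there, so $\IF=m(X)-\mu$ when $R_2=0$; thus $\IF$ involves $Y$ only on $\{R_2=1\}$ and is genuinely a function of $O$. Using $R_1\mid(X,Y)\sim\mathrm{Bernoulli}(\pi_1)$, $R_2=R_1$ whenever $R_1=1$, and $R_2\mid(R_1=0,X,Y)\sim\mathrm{Bernoulli}(\pi_2)$, a one-line computation gives the weight identity $E\{W\mid X,Y\}=\frac{(1-\pi_1)\pi_2}{\pi_2^2}-\frac{1-\pi_1}{\pi_1}\frac{\pi_1}{\pi_2}+1=1$. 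Since $Y-m(X)$ and $m(X)$ are $(X,Y)$-measurable, $E\{W(Y-m(X))\}=E\{Y-m(X)\}$, so $E\{\IF\}=E(Y)-E\{m(X)\}+E\{m(X)\}-\mu=0$; positivity~(ii) bounds $\pi_1,\pi_2$ away from $0$ and $1$ and yields $E\{\IF^2\}<\infty$. The same identity gives the repeatedly used consequence $E\{\IF\mid X\}=E(Y\mid X)-\mu$.

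For (c), I would parameterize $\pr(X,Y,R_1,R_2)$ through the five variationally independent blocks underlying factorization~\eqref{factor1} together with the marginal $\pr(X)$, namely $\pr(X)$, $A_1(X)$, $A_2(X)$, $\Gamma(X,Y)$ and $\pr_2(Y\mid X)=\pr(Y\mid R_2=1,R_1=0,X)$, so that a regular submodel's score decomposes as $S=s_X+s_{A_1}+s_{A_2}+s_\Gamma+s_2$ with these pieces spanning $\T$, and then check the identity one block at a time. The $s_X$ block is immediate from $\mu=E\{E(Y\mid X)\}$ and $E\{\IF\mid X\}=E(Y\mid X)-\mu$. For the remaining blocks I would compute $\partial_t\mu(P_t)|_0=E\{\partial_t E_{P_t}(Y\mid X)|_0\}$ using $\pi_k=\expit\{A_k(X)+\Gamma(X,Y)\}$, $p_2=\pi_1+(1-\pi_1)\pi_2$, the mixture representation of $\pr(Y\mid X)$ over the response patterns, the exponential-tilting identity~\eqref{tilting} for the imputation on $\{R_2=0\}$, and its consequence $\pr(Y\mid X,R_1=0)\propto\pr_2(Y\mid X)/\pi_2(X,Y)$ (which, after substituting~\eqref{tilting}, rewrites $m(X)$ as the reweighted mean $E_2\{Y(1-\pi_2)/\pi_2^2\mid X\}/E_2\{(1-\pi_2)/\pi_2^2\mid X\}$, where $E_2$ denotes expectation under $\pr_2(\cdot\mid X)$), and match this against $E\{\IF\,s_j\}=E\{[W(Y-m(X))+m(X)-E(Y\mid X)]\,s_j\}$.

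The main obstacle is exactly this matching for the $s_\Gamma$ and $s_2$ blocks. Perturbing $\Gamma$ simultaneously moves both propensity scores, the odds-ratio tilt in~\eqref{tilting}, and the resulting imputation, while perturbing $\pr_2$ moves the second-call outcome law and, through the mixture representation, the implied $\pr(R_1=1\mid X)$ and the normalizing factor $c_1(X)$ in~\eqref{factor1}. One must show that the centering $m(X)$ in the statement is precisely the $X$-measurable function for which these several contributions telescope, so that $E\{\IF\,s_j\}$ reproduces $\partial_t\mu$; I expect this to come down to imposing the derivative identity along the $s_\Gamma$- and $s_2$-directions, which characterizes $m(X)$ as the reweighted conditional mean above once every conditional expectation is re-expressed in terms of $\pr_2(\cdot\mid X)$ and $\pi_2$. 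The $s_{A_1}$ and $s_{A_2}$ blocks then reduce to elementary Bernoulli moment calculations. A conceptually cleaner alternative to (c), borrowing from Section~5, is to exhibit the estimating-equation (one-step) estimator solving $\hat E\{\IF(O;\hat\mu)\}=0$ with flexibly estimated nuisances, to argue it is regular and asymptotically linear with influence function $\IF(O;\mu)$ via its Neyman-orthogonality (mixed-bias) property, and then invoke Proposition~\ref{prop:tangent} once more to conclude efficiency.
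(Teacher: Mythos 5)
Your overall strategy coincides with the paper's: Proposition~\ref{prop:tangent} makes the observed-data tangent space the entire mean-zero Hilbert space, so $\mu$ has a unique influence function which is automatically the efficient one, and it suffices to verify that the displayed candidate is a valid influence function. The computations you actually carry out are correct: on $\{R_2=0\}$ the callback constraint forces $R_1=0$ so the weight $W$ vanishes and $\IF$ is genuinely a function of $O$; the identity $E(W\mid X,Y)=1$ is right; and the rewriting of $m(X)$ as $E_2\{Y(1-\pi_2)/\pi_2^2\mid X\}/E_2\{(1-\pi_2)/\pi_2^2\mid X\}$ via $\pr(Y\mid X,R_2=0)\propto (1-\pi_2)\pr_2(Y\mid X)/\pi_2$ is also correct.

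The gap is that step (c) --- the pathwise-derivative identity along the $s_{A_1},s_{A_2},s_\Gamma,s_2$ directions --- is only described, not executed, and it is the only step that actually proves the theorem. Precisely because $E(W\mid X,Y)=1$, the function $W\{Y-g(X)\}+g(X)-\mu$ has mean zero and finite variance for \emph{every} bounded choice of $g(X)$ (compare $\tilde\mu_\dr$ in Section~9, which uses a different weight and centering and is equally unbiased), so your checks (a)--(b) cannot distinguish the claimed $\IF$ from infinitely many competing candidates of the same form. What singles out $m(X)=E(Y/\pi_2\mid X,R_2=0)/E(1/\pi_2\mid X,R_2=0)$ as the correct centering is exactly the matching of $E\{\IF\, s_j\}$ against $\partial_t\mu(P_t)\vert_{t=0}$ for the $\Gamma$- and $\pr_2$-perturbations, which you yourself flag as ``the main obstacle'' and then defer with ``I expect this to come down to.'' Your fallback --- exhibiting the one-step estimator and arguing it is RAL with influence function $\IF$ via a mixed-bias property --- would also suffice, but it too requires establishing the same second-order product-bias expansion, so it does not avoid the computation. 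The plan is sound, the parameterization into variationally independent blocks is the right device, and nothing you wrote would fail; but the decisive calculation that identifies the centering, and hence the theorem itself, is missing.
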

If we know a priori that  the missingness mechanism is at random, i.e., in the submodel of $\M_\npr$ with $\Gamma(X,Y)=0$,
or equivalently $R_1\ind Y\mid (X,R_2)$ and $R_2\ind Y\mid (X,R_1)$,  
we can show that the efficient influence function is 
\[\IF_\aipw (O;\mu)=  \frac{R_2}{\pr(R_2=1\mid X)} Y -  \left\{\frac{R_2}{\pr(R_2=1\mid X)} - 1\right\}E(Y\mid X)  - \mu.\]
\begin{proposition}\label{prop:mar}
 $\IF_\aipw(O;\mu)$ is the efficient influence function for $\mu$ in model  $\M_\mar =\{\pr(X,Y,R_1,R_2):\pr(X,Y,R_1,R_2) \in \M_\npr \text{ and } \Gamma(X,Y)=0\}$.
\end{proposition}
The tangent space of $\M_\mar$  is no longer the entire Hilbert space and Proposition 4 does not contradict Theorem 2.
It is well known that $\IF_\aipw (O;\mu)$ is the efficient influence function for $\mu$ in   the  model $\{f(X,Y,R_2): R_2\ind Y\mid X\}$ when the final availability   $R_2$ is MAR.
The availability of callback data $R_1$ does not change this efficiency bound when missingness in both calls are at random.

From Theorem \ref{thm:eif},   the semiparametric efficiency bound for estimating $\mu$ in $\M_\npr$ is $E\{ \IF(O;\mu)^2\}$. 
A locally efficient estimator  attaining this  semiparametric efficiency bound can be constructed by  plugging  
nuisance estimators  of   $\pi_1, \pi_2, E(\cdot\mid R_2=0,X)$ that converge sufficiently fast  into the efficient influence function and then solving $\hat E\{\IF(O;\mu)\}=0$.
However, one may not be confident that these nuisance parameters can  be modeled correctly.
It is therefore of interest to develop a doubly robust estimation approach, which     delivers valid inferences about the outcome mean provided that a subset  but not necessarily all low dimensional models  for the nuisance parameters are specified correctly. 

\subsection{Doubly robust estimation}
To construct a doubly robust (DR) estimator, 
we specify working models $\{A_1(X;\alpha_1), A_2(X;\alpha_2)$, $\Gamma(X,Y;\gamma)$, $\pr_2(Y\mid X;\beta)\}$ 
and estimate the nuisance parameters by solving 
\begin{eqnarray}
0 &=& \hat{E} \left\{ (R_2-R_1) \cdot \frac{ \partial \log f_2(Y\mid X;\beta)}{\partial \beta} \right\},\label{dr1.1} \\
0&=&\hat E\left[\left\{\frac{R_1}{\pi_1(\alpha_1,\gamma)}-1\right\}\cdot V_1(X) \right],\label{dr1.2}\\
0&=&\hat E\left[\left\{\frac{R_2-R_1}{\pi_2(\alpha_2,\gamma)}-(1-R_1)\right\}\cdot V_2(X) \right],\label{dr1.3}\\
0&=&\hat  E\left[\left\{\frac{R_2 - R_1}{ \pi_2(\alpha_2,\gamma)} -   \frac{1-\pi_1(\alpha_1,\gamma)}{\pi_1(\alpha_1,\gamma)}R_1 \right\}\cdot 
\left\{ U(X,Y) -  E(U(X,Y)\mid X, R_2=0;\beta,\gamma) \right\}\right], \label{dr1.4}
\end{eqnarray}
where $V_1(X)=\partial A_1(X;\alpha_1)/\partial \alpha_1$, $V_2(X)=\partial A_2(X;\alpha_2)/\partial \alpha_2$, $U(X,Y)=\partial \Gamma(X,Y;\gamma)/\partial \gamma$.
Note that   $V_1,V_2,U$ can be chosen as other user-specified functions; see discussions in Section 4.3 and    \citet[page 30]{tsiatis2006semiparametric}.
Let $(\hat\beta, \hat\alpha_{1,\dr},\hat\alpha_{2,\dr},\hat\gamma_\dr)$  denote the solution to \eqref{dr1.1}--\eqref{dr1.4}
and $\hat \pi_1=\pi_1(\hat\alpha_{1,\dr},\hat\gamma_\dr), \hat \pi_2=\pi_2(\hat\alpha_{2,\dr},\hat\gamma_\dr), $ 
then an estimator of $\mu$ motivated from the influence function  in Theorem \ref{eif1} is 
\begin{eqnarray}
\hat\mu_\dr&=&\hat E\left[\left\{ \frac{R_2-R_1}{\hat\pi_2^2} - \frac{1-\hat\pi_1}{\hat\pi_1} \frac{R_1}{\hat{\pi}_2} +  \frac{R_1}{\hat{\pi}_1} \right\}  Y\right] \nonumber\\ 
&&-  \hat E\left[\left\{ \frac{R_2-R_1}{\hat\pi_2^2} - \frac{1-\hat\pi_1}{\hat\pi_1} \frac{R_1}{\hat{\pi}_2} +  \frac{R_1}{\hat{\pi}_1}  -1\right\} \frac{ E( Y/\hat\pi_2\mid X,R_2=0;\hat\beta,\hat\gamma_\dr) }{E( 1/ \hat\pi_2\mid X,R_2=0;\hat\beta,\hat\gamma_\dr)} \right].  \label{esti:dr1}
\end{eqnarray}
Equations \eqref{dr1.1}--\eqref{dr1.3} for estimating $(\beta,\alpha_1,\alpha_2)$ remain the same 
as \eqref{reg1.1}, \eqref{ipw1.1}, and   \eqref{ipw1.2}  in the IPW and REG estimation, respectively;
equation \eqref{dr1.4} is a doubly robust estimating equation for $\gamma$.
We summarize the double robustness of the estimators.
\begin{theorem}\label{thm:dr}
Under Assumption \ref{assump:odds} and   regularity conditions described by \citet[][Theorems 2.6 and 3.4]{newey1994large},   $(\hat\alpha_{1,\dr},\hat\gamma_\dr,\hat\mu_\dr)$ are  consistent and asymptotically normal provided  one of the following conditions holds:
\begin{itemize}
\item  $A_1(X;\alpha_1),\Gamma(X,Y;\gamma)$, and $A_2(X;\alpha_2)$ are correctly specified; or   
\item  $A_1(X;\alpha_1),\Gamma(X,Y;\gamma)$, and $\pr_2(Y\mid X;\beta)$ are correctly specified.
\end{itemize}
Furthermore, $\hat\mu_\dr$ attains the semiparametric efficiency bound for the nonparametric model $\M_\npr$
at the intersection model $\M_\ipw\cap \M_\reg$ where all models $\{A_1(X;\alpha_1),A_2(X;\alpha_2), \Gamma(X,Y;\gamma),\pr_2(Y\mid X;\beta)\}$ are correct.
\end{theorem}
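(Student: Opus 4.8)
Here is the proof plan I would follow.

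The plan is to treat the entire procedure as a single $M$-estimator: stack the estimating equations \eqref{dr1.1}--\eqref{dr1.4} for the nuisances $\eta=(\beta,\alpha_1,\alpha_2,\gamma)$ together with the defining relation \eqref{esti:dr1}, written as $0=\hat E\{g_\mu(O;\mu,\eta)\}$ with $g_\mu(O;\mu,\eta)$ the integrand in \eqref{esti:dr1} minus $\mu$. Under the conditions of \citet[][Theorems 2.6 and 3.4]{newey1994large}, such an estimator is $\sqrt n$-consistent and asymptotically normal for the locally unique root $(\beta^*,\alpha_1^*,\alpha_2^*,\gamma^*,\mu^*)$ of the corresponding population equations, with the usual sandwich variance; so the task reduces to (a) locating the probability limits of the nuisance estimators under each of the two conditions and (b) showing $\mu^*=\mu_0:=E(Y)$. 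The linchpin is a single identity implied by Assumption \ref{assump:odds}(iii): if a working second-call propensity $\pi_2^{\dagger}(X,Y)=\expit\{A_2^{\dagger}(X)+\Gamma(X,Y)\}$ carries the correct odds-ratio function $\Gamma$ but a possibly wrong baseline $A_2^{\dagger}$, then a one-line $\logit$/$\expit$ computation gives $\pi_2/\pi_2^{\dagger}-1=\{\exp(A_2(X)-A_2^{\dagger}(X))-1\}(1-\pi_2)$, where $\pi_2$ and $A_2$ denote the truth; this is just a restatement of the exponential tilting \eqref{tilting}. Multiplying by $1-\pi_1$ and invoking the callback identity $\pr(R_2=0\mid X,Y)=(1-\pi_1)(1-\pi_2)$ shows that $(1-\pi_1)(\pi_2/\pi_2^{\dagger}-1)$ equals a function of $X$ alone times $\pr(R_2=0\mid X,Y)$; hence, since $\pr(R_2=0\mid X,Y)\pr(Y\mid X)=\pr(R_2=0\mid X)\pr(Y\mid X,R_2=0)$, for every square-integrable $g$ one has $E[(1-\pi_1)(\pi_2/\pi_2^{\dagger}-1)\{g(X,Y)-E(g\mid X,R_2=0)\}\mid X]=0$.

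With this in hand I would argue as follows. Equation \eqref{dr1.1} is the score equation for $f_2(\cdot;\beta)$, so $\hat\beta$ converges to the true $\beta$ when $f_2$ is correct and to a pseudo-true value otherwise. If $A_1,\Gamma,A_2$ are correct, conditioning on $(X,Y)$ shows that $R_1/\pi_1-1$, $(R_2-R_1)/\pi_2-(1-R_1)$, and $(R_2-R_1)/\pi_2-(1-\pi_1)R_1/\pi_1$ all have conditional mean zero at the true $(\alpha_1,\alpha_2,\gamma)$---the latter two via $E\{(R_2-R_1)/\pi_2\mid X,Y\}=1-\pi_1$---so the true values solve \eqref{dr1.2}--\eqref{dr1.4} (the extra factors $V_1,V_2,U-E(U\mid X,R_2=0)$ are functions of the observables) and, by the uniqueness built into the regularity conditions, $(\hat\alpha_{1,\dr},\hat\alpha_{2,\dr},\hat\gamma_\dr)$ are consistent for them. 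If instead $A_1,\Gamma,f_2$ are correct, the true $(\alpha_1,\gamma)$ still solves \eqref{dr1.2}; moreover, evaluated at the true $(\alpha_1,\gamma,\beta)$ the first-call propensity and the tilting-based $E(U\mid X,R_2=0;\beta,\gamma)$ are both correct, so the integrand of \eqref{dr1.4} takes the form $(1-\pi_1)(\pi_2/\pi_2^{\dagger}-1)\{U-E(U\mid X,R_2=0)\}$ with $\pi_2^{\dagger}$ the working propensity, which by the displayed identity has conditional mean zero given $X$ for every $\alpha_2$. Hence the true $\gamma$ solves \eqref{dr1.4}, $\alpha_2^*$ is the (possibly inconsistent) root of \eqref{dr1.3}, and $(\hat\alpha_{1,\dr},\hat\gamma_\dr)$ are consistent. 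So under either condition the first-call propensity is consistently estimated, and under the first condition so is the second-call propensity.

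Next I would verify $\mu^*=\mu_0$. Write the integrand of \eqref{esti:dr1} as $W\{Y-m(X)\}+m(X)$ with $W$ the bracketed weight and $m(X)=E(Y/\pi_2^{\dagger}\mid X,R_2=0)/E(1/\pi_2^{\dagger}\mid X,R_2=0)$, all evaluated at the limiting nuisances; then $\mu^*-\mu_0=E[(W-1)\{Y-m(X)\}]$. Under the first condition $E(W\mid X,Y)=1$, so this is $0$ whatever $m$ is. Under the second condition the first-call propensity and the tilting are correct; the linchpin identity gives $E(W-1\mid X,Y)=\{\exp(A_2(X)-A_2^{\dagger}(X))-1\}\,\pr(R_2=0\mid X,Y)/\pi_2^{\dagger}(X,Y)$, so that $E[(W-1)\{Y-m(X)\}]=E_X\big[\{\exp(A_2-A_2^{\dagger})-1\}\pr(R_2=0\mid X)\{E(Y/\pi_2^{\dagger}\mid X,R_2=0)-m(X)\,E(1/\pi_2^{\dagger}\mid X,R_2=0)\}\big]$, which is identically zero because $m(X)$ is precisely the $1/\pi_2^{\dagger}$-weighted conditional mean---the weights in \eqref{esti:dr1} were chosen for exactly this cancellation. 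Thus $\mu^*=\mu_0$ under both conditions, and consistency and asymptotic normality of $(\hat\alpha_{1,\dr},\hat\gamma_\dr,\hat\mu_\dr)$ follow from the $M$-estimation theory.

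Finally, at $\M_\ipw\cap\M_\reg$ all four working models are correct, so the data-generating law lies in $\M_\npr$ and, by the above, all four nuisance estimators and $\hat\mu_\dr$ are consistent; standard arguments then make $\hat\mu_\dr$ regular and asymptotically linear there. Since the observed-data tangent space of $\M_\npr$ is the whole Hilbert space (Proposition \ref{prop:tangent}), $\mu$ admits a unique influence function, namely the efficient one $\IF(O;\mu)$ of Theorem \ref{thm:eif}; hence the influence function of $\hat\mu_\dr$ must be $\IF(O;\mu)$, and $\sqrt n(\hat\mu_\dr-\mu_0)$ converges in distribution to $N(0,E\{\IF(O;\mu)^2\})$, the semiparametric efficiency bound. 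The part I expect to be hardest is the second condition in the two middle paragraphs---showing that misspecifying the second-call baseline $A_2$ leaves both the $\gamma$-equation \eqref{dr1.4} and the estimator \eqref{esti:dr1} unbiased; everything there turns on the single identity $\pi_2/\pi_2^{\dagger}-1=\{\exp(A_2-A_2^{\dagger})-1\}(1-\pi_2)$, which converts the $A_2$-error into an $X$-only factor times $\pr(R_2=0\mid X,Y)$ and lets the callback structure and the $1/\pi_2^{\dagger}$-weighting in $m(X)$ telescope the bias away. The remaining ingredients---the first condition, the $M$-estimation bookkeeping, and the tangent-space argument for efficiency---are routine.
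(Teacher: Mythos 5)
Your proof is correct and follows essentially the same route as the paper's supplementary argument: stacked M-estimation via \citet{newey1994large}, conditional-mean-zero verification that the true $(\alpha_1,\gamma)$ (and $\alpha_2$ or $\beta$, as applicable) solve the population versions of \eqref{dr1.1}--\eqref{dr1.4} under each condition, the identity $(1-\pi_1)(\pi_2/\pi_2^{\dagger}-1)=\{e^{A_2-A_2^{\dagger}}-1\}\pr(R_2=0\mid X,Y)$ to convert baseline misspecification into an $X$-only factor times the nonresponse probability, and the cancellation built into the $1/\pi_2$-weighted conditional mean $m(X)$ to show $E[(W-1)\{Y-m(X)\}]=0$ in both cases. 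The one thin spot is the closing regularity/efficiency claim at the intersection model, which is more cleanly justified by observing that your own identities imply $\partial E\{g_\mu(\mu_0,\eta)\}/\partial\eta=0$ at the truth (the derivative in each nuisance direction vanishes by the same cancellations), so the nuisance-estimation correction to the influence function of $\hat\mu_\dr$ is zero and its influence function is exactly $\IF(O;\mu)$.
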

This theorem states that $(\hat\alpha_{1,\dr},\hat\gamma_\dr,\hat\mu_\dr)$ are doubly robust against misspecification of the second-call baseline propensity score $A_2(X;\alpha_2)$ and the second-call   outcome distribution  $\pr_2(Y\mid X;\beta)$, provided that the first-call propensity score $\pi_1(X,Y;\alpha_1,\gamma)$ (i.e., $A_1(X;\alpha_1)$ and $\Gamma(X,Y;\gamma)$)
is correctly specified.
Moreover, $\hat\mu_\dr$ is locally efficient if all working models are correct, regardless of the efficiency of  the nuisance estimators.
The outcome mean estimator $\hat\mu_\dr$ has an analogous form to the conventional augmented inverse probability weighted (AIPW) estimator: 
the first part  is an IPW estimator and the second part is an augmentation term involving the outcome regression model.
Compared to the   IPW and REG estimators in the previous section, 
the doubly robust estimator offers one more chance to  correct the bias due to model misspecification.  
However, the DR estimator  is not anticipated to have a smaller variance than the IPW or the REG estimator, 
because  $\M_\npr$ is a larger model with a semiparametric efficiency bound no smaller than that of $\M_\ipw$ or $\M_\reg$.
Besides, if   both $A_2(X;\alpha_2)$ and $\pr_2(Y\mid X;\beta)$  are incorrect, 
the proposed doubly robust estimator will generally also be biased.
The odds ratio $\Gamma(X,Y;\gamma)$ is essential for all three  proposed estimation methods,
because it encodes the degree to which the outcome and the missingness process are correlated.  
In order to estimate  the outcome mean, one must   be able to account for this correlation. 
The first-call baseline propensity score  $A_1(X;\alpha_1)$  needs to be correct for estimation of the odds ratio, and as a result,
$\hat\alpha_{1,\dr}$ is also doubly robust.
If $\pi_1(\alpha_1,\gamma)$ is incorrect, $\hat\mu_\dr$ is in general not consistent even if both $A_2(X;\alpha_2)$ and 
$\pr_2(Y\mid X;\beta)$ are correct---because the latter two models only concern   the second call.
Variance estimation and confidence intervals for the doubly robust approach  also  follow from the general theory for estimation equations \citep[see e.g.,][]{newey1994large},
which can be constructed  based on  the normal approximation or bootstrap under standard regularity conditions.

Doubly robust    methods  have been advocated  in recent years    for   missing data analysis, causal inference, and other problems with  data coarsening. 
Previous proposals  have assumed that   the odds ratio function $\Gamma(X,Y)$ is either known exactly  with  the special case of MAR  \citep[e.g.,][]{scharfstein1999adjusting,lipsitz1999weighted,tan2006distributional,vanderlaan2006targeted,vansteelandt2007estimation,okui2012doubly},
or can be estimated with the aid of an extra instrumental or shadow variable \citep[e.g.,][]{miao2016varieties,sun2018semiparametric,ogburn2015doubly}.
We offer an alternative  approach that achieves doubly robust estimation of both the odds ratio model and the outcome mean with the aid of callback data.
The double robustness of the DR estimator  in Theorem \ref{thm:dr} requires $A_1(X;\alpha_1)$ and $\Gamma(X,Y;\gamma)$ to be correctly specified. 
In Section S3 of the supplement, we also derive the closed form of the efficient influence function in  this semiparametric model 
\[\M_\pa=\{f(X,Y,R_1,R_2):\text{ Assumption \ref{assump:odds} holds}; \text{ $A_1(X;\alpha_1)$ and $\Gamma(X,Y;\gamma)$ are correct}\}.\]
Although  the efficient influence function  in $\M_\pa$ is anticipated to be more efficient than that  in the nonparametric model $\M_\npr$, 
the validity of  this efficient influence function  relies on the correctness of parametric working models for $A_1(X)$ and $\Gamma(X,Y)$.
Instead,  it is preferable to  use    flexible or nonparametric working models to estimate all the nuisance functions and then plug in them into the efficient influence function $\IF(O;\mu)$ in the nonparametric model $\M_\npr$ to estimate $\mu$. 
The following theorem establishes that such an estimator can  deliver  valid $n^{1/2}$  inference for  the functional of interest even if the   nuisance estimators could have   convergence rates considerably slower than $n^{1/2}$. 
Let $\hat A_1$, $\hat A_2$, $\hat \Gamma$ and $\hat f_2$ denote the nuisance estimators using some flexible estimation methods,
with the following norms that measure the convergence rates,
$||\hat A_1 - A_1||_2  =  [ \int \{\hat A_1(x) -A_1(x)\}^2 f(x) dx ]^{1/2}$,
$||\hat A_2 - A_2||_2  =  [ \int \{\hat A_2(x) -A_2(x)\}^2 f(x) dx ]^{1/2}$,
$||\hat\Gamma - \Gamma||_2 = [ \int \int \{\hat\Gamma(x,y) - \Gamma(x,y)\}^2f_2(y\mid x) dy f(x)dx ]^{1/2}$,
$||\hat f_2 - f_2||  =  [ \int \{ \int |\hat f_2(y\mid x) - f_2(y\mid x)|dy \}^2 f(x) dx ]^{1/2}.$

\begin{theorem} \label{thm:dr:fl}
	Under Assumption \ref{assump:odds}, suppose   $c \leq \pi_1,\pi_2 \leq 1-c$ for some constant $c>0$,     $|Y| \leq M$ for some constant $M>0$, and the   nuisance estimators satisfy that
(i) $c \leq \hat\pi_1,\hat \pi_2 \leq 1-c$;
(ii) $||\hat{A}_1 - A_1||_2 = o_p(1), ||\hat{A}_2 - A_2||_2 = o_p(1), ||\hat{\Gamma} - \Gamma||_2 = o_p(1), ||\hat{f}_2 - f_2||  = o_p(1)$;
and (iii) $\hat A_1$, $\hat A_2$, $\hat \Gamma$, $\hat f_2$ and $A_1$, $A_2$, $\Gamma$, $f_2$ are in a Donsker class.
 Letting
\begin{eqnarray}
	\hat\mu_{\dr2}&=&\hat E\left[\left\{ \frac{R_2-R_1}{\hat\pi_2^2} - \frac{1-\hat\pi_1}{\hat\pi_1} \frac{R_1}{\hat{\pi}_2} +  \frac{R_1}{\hat{\pi}_1} \right\}  Y\right] \nonumber\\ 
	&&-  \hat E\left[\left\{ \frac{R_2-R_1}{\hat\pi_2^2} - \frac{1-\hat\pi_1}{\hat\pi_1} \frac{R_1}{\hat{\pi}_2} +  \frac{R_1}{\hat{\pi}_1}  -1\right\} \frac{ \int y e^{-\hat \Gamma}/\hat\pi_2 \hat f_2 dy }{\int e^{-\hat \Gamma}/ \hat\pi_2 \hat f_2 dy} \right],  \label{esti:np}
\end{eqnarray}
 then we have
	\begin{eqnarray*}
		\hat\mu_{\dr2} - \mu &=& \hat E\{\IF(O;\mu)\} + O_p({\rm Rem}) + o_p(n^{-1/2}),
	\end{eqnarray*}
	where
\begin{eqnarray} \label{Rem}
	{\rm Rem} &= &  ||\hat{A}_1 - A_1||_2^2 + ||\hat{\Gamma} - \Gamma||_2^2 + ||\hat{A}_1 - A_1||_2\cdot ||\hat{\Gamma} - \Gamma||_2 + ||\hat{A}_2 - A_2||_2 \cdot ||\hat{f}_2 - f_2||  \nonumber\\
	&& + (||\hat{A}_1 - A_1||_2 + ||\hat{\Gamma} - \Gamma||_2) \cdot ( ||\hat{A}_2 - A_2 ||_2 + ||\hat{f}_2 -f_2||).
\end{eqnarray}
\end{theorem}

Theorem \ref{thm:dr:fl} generalizes the parametric doubly robust estimation.  
Theorem \ref{thm:dr:fl} implies that $\hat\mu_{\dr2}$ has influence function $\IF(O;\mu)$ if the bias term ${\rm Rem}$, a function of estimation errors of nuisance estimators, is of order $o_p(n^{-1/2})$. 
The bias of $\hat\mu_{\dr2}$ is of second order, i.e., 
it only depends on   squared errors of nuisance estimators. 
This result opens the way to implement more sophisticated estimation methods  built on  flexible,  data-adaptive machine learning or  nonparametric    nuisance models   to achieve $n^{1/2}$-consistent estimation of the functional of interest,  
provided that    the  nuisance estimators  have  estimation error of order smaller than $n^{-1/4}$. 
For example, one can use sieve estimation with polynomials as basis functions to approximate nuisance functions.
Suppose   $A_1(X)$ and $A_2(X)$ are unrestricted, $\Gamma(X,Y) = G(X)Y$, $f_2(Y\mid X) \sim N(B_1(X),\exp\{B_2(X)\})$ for some unknown functions $G,B_1,B_2$ of $X$,
and let $\{\psi_j(X)\}_{j=1}^\infty$ denote a sequence of polynomials of $X$ and $V_{i,n}(X) = \{\psi_1(X),\dots,\psi_{J_{i,n}}(X)\}^\t$ for some $J_{i,n}$, $i=1,\dots,5$,
then we  approximate the nuisance functions by $A_{1,n}(X) = \alpha_{1,n}^\t V_{1,n}(X)$, $A_{2,n}(X) = \alpha_{2,n}^\t V_{2,n}(X)$, $G_{n}(X) = \gamma_{n}^\t V_{3,n}(X)$, $B_{1,n}(X) = \beta_{1,n}^\t V_{4,n}(X)$, $B_{2,n}(X) = \beta_{2,n}^\t V_{5,n}(X)$.
The nuisance estimators are obtained with $\alpha_{1,n}$, $\alpha_{2,n}$, $\gamma_{n}$, $\beta_{1,n}$, $\beta_{2,n}$   estimated by solving equations   \eqref{dr1.1}-\eqref{dr1.4}.
Finally, $\hat\mu_{\dr2}$ is obtained by plugging in the nuisance estimators into \eqref{esti:np}.
Theorem \ref{thm:dr:fl} imposes the Donsker condition \citep{van1996weak}  on  the complexity of the nuisance models.
It could be relaxed to admit more complicated working models by  employing the cross-fitting method \citep[e.g.,][]{robins2008higher,chernozhukov2018double}.

Estimation methods with second-order bias  have been   extensively studied by \citet{benkeser2017doubly,kennedy2017non,athey2018approximate,tan2020model,rotnitzky2021characterization,dukes2021inference,chernozhukov2018double},  
and Theorem \ref{thm:dr:fl} contributes to this literature with an instance in  nonignorable missingness data analysis.
Theorem \ref{thm:dr:fl} reveals    the mixed bias property \citep{rotnitzky2021characterization} in  the nonignorable missing data setting when the odds ratio function needs to be estimated: if   $A_1$ and $\Gamma$ are estimated parametrically, the bias of  $\hat\mu_{\dr2}$  only  depends on $||\hat{A}_2 - A_2||_2$ and $||\hat{f}_2 - f_2||$ through their product. 
The justification of this result involves a more sophisticated analysis of the bias term instead of directly applying the second-order Taylor expansion technique,
and our technical contribution may shed some light on characterizing the bias terms of estimators with multiple nuisance components.

\section{Extensions}

\subsection{Estimation of a general full-data functional}
We extend the  proposed IPW, REG, and DR methods to estimation of a general smooth full-data functional   $\theta$, defined as the unique solution to a given estimating equation $E\{m(X,Y;\theta)\}=0$.
Familiar examples include the outcome mean   with $m(X,Y;\mu)=Y-\mu$;
the least squares coefficient  with $m(X,Y;\theta)= X(Y-\theta^\t X)$;
and the instrumental variable estimand in causal inference with $m(X,Y;\theta)= Z(Y-\theta^\t W)$, 
where $W$ and $Z$ are subvectors of $X$, $\theta$ is the causal effect of $W$ on $Y$ and is subject to unmeasured confounding and $Z$ is a set of  instrumental variables used for confounding adjustment in causal inference.
Given  estimators of the nuisance parameters, the IPW and REG estimators of $\theta$ can be obtained by solving 
\[0=\hat E\left\{\frac{R_2}{\hat p_2} m(X,Y;\theta)\right\},\]
\[0=\hat E [R_2 m(X,Y;\theta) + (1-R_2)  E\{m(X,Y;\theta) \mid X,R_2=0;\hat\beta,\hat\gamma_\reg\}],\]
respectively.
Letting $m(\theta)\equiv m(X,Y;\theta)$, the efficient influence function for $\theta$ in     model $\M_\npr$ is 
\begin{eqnarray}\label{geif1}
&&\hspace{-1cm}\IF(O;\theta)= -\left[\frac{\partial E\{m(\theta)\}}{\partial \theta}\right]^{-1}\nonumber\\
&&\quad\cdot \left[\left\{\frac{R_2-R_1}{\pi^2_2} - \frac{1-\pi_1}{\pi_1} \frac{R_1}{\pi_2}+ \frac{R_1}{\pi_1} - 1\right\}
\left\{m(\theta) - \frac{E(m(\theta)/\pi_2  \mid X,R_2=0)}{E(1 /\pi_2\mid X,R_2=0)}\right\} + m(\theta) \right].
\end{eqnarray}
We prove this result in the supplement.
Then a  doubly robust and locally efficient estimator of $\theta$ can be constructed by solving $\hat E\{\IF(O;\theta)\}=0$, 
after obtaining  nuisance estimators.

\subsection{Estimation with multiple callbacks}
When multiple calls $(K\geq 3)$ are available, identification of $f(X,Y,R_1,\dots,R_K)$   equally holds  under Assumption \ref{assump:odds}, even if the propensity scores and outcome distributions  for the third and later call attempts are completely unrestricted.
Besides, identification can be achieved if Assumption \ref{assump:odds} holds for at least two known adjacent calls.
To see this, suppose that the stableness of resistance  holds for the $k$th and $k+1$th calls. 
By viewing the $k$th and $k+1$th calls as the first two calls in a subsurvey on nonrespondents from the first $k-1$th calls (i.e., $R_{k-1}=0$) and by applying the proposed approach, 
we can identify $f(X,Y,R_k,R_{k+1}\mid R_{k-1}=0)$.
Noting that $f(X,Y,R_k,R_{k+1}\mid R_{k-1}=1)$ is available from the observed data, 
we can  identify $f(X,Y,R_{k-1},R_k,R_{k+1})$ and  thus $f(X,Y)$.
Identifying $f(X,Y)$ suffices to  identify $f(X,Y,R_1,\ldots, R_K)$.
In this case, the proposed IPW, REG, and DR estimation methods developed for   the setting with two calls 
still work but  they  are agnostic to the  data obtained in later calls.
In Section S4 of the supplement, we extend the proposed IPW, REG, and DR estimators to the setting with multiple callbacks, which can incorporate all observations on $(X,Y)$.

\section{Simulation}    
We evaluate the performance of the proposed estimators and assess their robustness against  misspecification of working models and violation of the key identifying assumption   via simulations.   
We conduct simulations for  both continuous and binary outcome settings.
Simulation results for these two settings are analogous, 
and to save space, here we only present simulations for  the continuous setting and defer   the binary  setting   to Section S6.3 of the supplement. 

Let $X=(1,X_1,X_2)^\t$ and  $\widetilde X=(1,X_1^2,X_2^2)$ with $X_1,X_2$ independent following a uniform distribution ${\rm Unif}(-1,1)$.
We consider four   data generating scenarios according to different  choices for 
the second-call baseline propensity score and the second-call outcome distribution.
The following Table  \ref{tbl:dgm1} presents the data generating mechanisms and the working models for estimation.
\vspace{-1em}
\begin{table}[H]
\centering
\caption{Data generating model and working model for estimation}\label{tbl:dgm1}
\begin{tabular}{ccccc}
&\multicolumn{4}{c}{Data generating model}\\
&\multicolumn{4}{c}{$\pi_1=\expit(\alpha_1^\t X + \gamma Y),\pi_2=\expit(\alpha_2^\t  W_1 + \gamma Y), f_2(Y\mid X)\thicksim N(\beta^\t W_2,  \sigma^2)$}\\
&\multicolumn{4}{c}{Four  scenarios with different choices of $(W_1,W_2)$}\\
& TT&FT&TF&FF\\
&$W_1=X,W_2=X$&$W_1=\widetilde X,W_2=X$&$W_1=X,W_2=\widetilde X$&$W_1=\widetilde X,W_2=\widetilde X$\\
$\alpha_1^\t $& (0, 0.6, 0.5)& (-0.35, -0.5, 0.7)  & (-1, 1, -0.1)& (-0.3, -0.5, 1)  \\
$\alpha_2^\t$&  (1.4, -0.5, 0.2)& (-0.5, 1.8, 1) & (0.5, 1, -0.1)  & (-0.4, 0.8, 0) \\
$\beta^\t$&  (0.6, 1.0, 0.3) & (-0.8, 5, 3.5)  & (-0.5, 5, -1) & (-1.5, 4, 3)  \\
$\gamma$&  0.13 & 0.12 & 0.5 & 0.25 \\ 
$\sigma$& 3 & 2 & 0.4 & 0.25 \\
&\multicolumn{4}{c}{Working model for estimation}\\
&\multicolumn{4}{c}{$\pi_1=\expit(\alpha_1^\t X + \gamma Y),\pi_2=\expit(\alpha_2^\t X + \gamma Y), f_2(Y\mid X)\thicksim N(\beta^\t X,  \sigma^2)$}
\end{tabular}
\end{table}
Hence, the working model for the  second-call baseline propensity score  is correct in Scenarios (TT) and (TF),
the second-call outcome model is correct in Scenarios (TT) and (FT),  and they both are  incorrect in Scenario (FF).
The first-call baseline propensity score and the odds ratio models are correct in all scenarios.
For estimation,  we implement the proposed IPW, REG, and  DR  methods $(\hat{\mu}_{\ipw}$,$\hat{\mu}_{\reg},\hat{\mu}_{\dr})$ based on parametric working models to estimate the outcome mean and the odds ratio parameter.
We  compute the variance of these estimators  and then construct  the confidence interval based on the normal approximation and evaluate the  coverage rate.
For comparison, we   also implement  standard   IPW$_\mar$, REG$_\mar$, DR$_\mar$ estimators ($\hat{\mu}^{\rm{mar}}_{\ipw}$,$\hat{\mu}^{\rm{mar}}_{\reg},\hat{\mu}^{\rm{mar}}_{\dr}$) that are based on MAR,  with the number of callbacks included as an additional covariate.

For each scenario, we replicate 1000 simulations at sample size 5000.
Figures \ref{fig:simu1mu} and \ref{fig:simu1gamma} show the  bias for   estimators of  the outcome mean and the odds ratio parameter, respectively, and Table \ref{tbl:coverage} shows the coverage rates for the corresponding confidence intervals.
In Scenario (TT) where all working models are correct, all three proposed estimators have little bias and   the 95\% confidence intervals  have  coverage rates    close to 0.95;
in Scenario (FT) the second-call baseline propensity score model is incorrect but the second-call outcome model is correct, then the IPW estimator has large bias with    coverage rate well below 0.95 and the REG estimator has little bias with the coverage rate close to 0.95;
conversely, in Scenario (TF) the second-call baseline propensity score model is correct but the second-call outcome model is incorrect, then the IPW estimator has little bias with an appropriate coverage rate and the REG estimator has large bias with an  undersized coverage rate.
The DR estimator has  little bias with    appropriate coverage rates in all the three  scenarios when at least one of  these two working models is correct.
In Scenario (FF) where both working models are incorrect, all three proposed  estimators are biased.
These numerical results   show the robustness of  $\hat{\mu}_{\dr}$ against partial  misspecification of working models. 
We  compute the standard deviations  of the three proposed estimators in Table \ref{tbl:sd}  in  the supplement, which are  close  when all working models are   correct.
However, the three standard MAR estimators have large bias in all four scenarios.
Therefore, we recommend  the proposed    methods  for nonresponse adjustment with callback data, 
and we suggest  to use different working models to gain more robust inferences.

In Section S6.2 of the supplement, we conduct additional simulations to compare the performance of $\hat \mu_{\dr2}$ using   flexible  nuisance models and the IPW, REG and DR   estimators that are based on parametric working models.
We consider six data generating scenarios,   
including  two scenarios where all parametric working models are correct or incorrect, respectively,  and four scenarios where each of $A_1,A_2,f_2,\Gamma$ is incorrect, respectively.
The   estimator $\hat\mu_{\dr2}$ has little bias with appropriate coverage rates in all   six scenarios.
But the IPW, REG and DR estimators could have large bias if the required working models are incorrect;
in particular, these three estimators are biased if the odds ratio model is incorrectly specified.
The numerical results show the robustness of $\hat\mu_{\dr2}$, which delivers  valid inference in scenarios where the nuisance models are complicated and   parametric working models  are more likely to be misspecified.

In addition to model misspecification, we also  evaluate sensitivity  of the inference against violation of the identifying assumption.
%In Section S6.2 of the supplement, we include    a sensitivity analysis  to evaluate the performance of the proposed estimators    when
We include    a sensitivity analysis  to evaluate the performance of the proposed estimators    when 
call-specific odds ratios arise, i.e.,  the stableness of resistance assumption is not met.
The difference between the odds ratios is used as the  sensitivity parameter to  capture the degree to which the    stableness of resistance assumption is violated.
The proposed estimators exhibit  small  bias when the difference of odds ratios varies within a moderate range but it could become severe if there were a big gap between  the odds ratios.
To obtain  reliable  inferences in practice, we recommend such sensitivity analysis for assessing  robustness of inference against violation of the stableness of resistance assumption.

\begin{figure}[H]
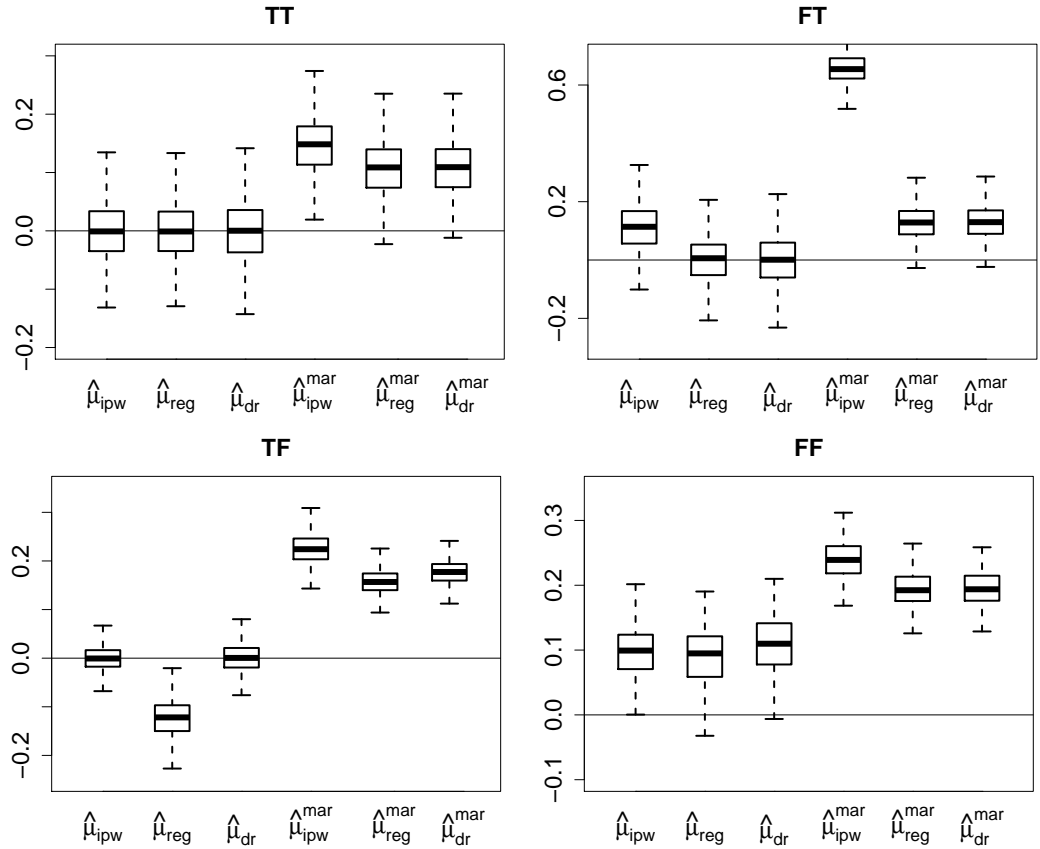

\graphicspath{{Rscripts/Simulations/Simulations_Continuous/Results/}}
\centering
\includegraphics[scale=0.5]{TT/bias_mu.pdf}
\includegraphics[scale=0.5]{FT/bias_mu.pdf}\\
\includegraphics[scale=0.5]{TF/bias_mu.pdf}
\includegraphics[scale=0.5]{FF/bias_mu.pdf}
\caption{Bias for estimators of $\mu$ in the continuous outcome setting.} \label{fig:simu1mu}
\end{figure}
\begin{figure}[H]
\graphicspath{{Rscripts/Simulations/Simulations_Continuous/Results/}}
\includegraphics[scale=0.5]{TT/bias_gamma.pdf}
\includegraphics[scale=0.5]{FT/bias_gamma.pdf}
\includegraphics[scale=0.5]{TF/bias_gamma.pdf}
\includegraphics[scale=0.5]{FF/bias_gamma.pdf}
\caption{Bias for estimators of $\gamma$  in the continuous outcome setting.}\label{fig:simu1gamma}
\end{figure}

\begin{table}[H]
\centering
\caption{Coverage rate of $95\%$ confidence interval in the continuous outcome setting} \label{tbl:coverage} 
\begin{tabular}{ccccccccccc}
&\multicolumn{6}{c}{$\mu$} && \multicolumn{3}{c}{$\gamma$}\\ 
Scenarios& IPW & REG  & DR &IPW$_\mar$ &REG$_\mar$&DR$_\mar$& &IPW  & REG  & DR   \\
TT &{0.955}  & {0.955} & {0.950} &{0.109}  & {0.362} & {0.363}& &{0.955} & {0.955} & {0.956} \\ 
FT & {0.629} & {0.948} & {0.952} &{0.000}  & {0.442} & {0.426}& &{0.322} &  {0.957}  & {0.945} \\
TF & {0.944}  & {0.125} & {0.952} &{0.000}  & {0.000} & {0.000}& &{0.947} & {0.501} & {0.944}    \\
FF & {0.286} & {0.446} & {0.287}  &{0.000}  & {0.000} & {0.000}& &{0.489} & {0.614}  & {0.675}  
\end{tabular} 
\end{table}

\section{Real data application}

The Consumer Expenditure Survey \citep[CES;][]{ces2013measuring}  is a nationwide survey conducted by the U.S. Bureau of Labor Statistics to  find out  how American households make and spend  money. 
It comprises    two  surveys: the Quarterly Interview  Survey  on large and recurring expenditures  such as rent and utilities, and the Diary Survey on small and high frequency purchases, such as food and clothing.
The survey  data are  released  annually  since 1980, which  contain detailed  callback history.           
We analyze   the public-use microdata from  the Quarterly Interview  Survey in the fourth quarter of 2018,
available from \href{https://www.bls.gov/cex/pumd\_data.htm\#csv}{\texttt{https://www.bls.gov/cex/pumd\_data.htm\#csv}}.
The nonresponse  of frame variables is concurrent due to contact failure or refusal
and no fully-observed baseline covariates are available in this dataset.
For illustration, we analyze this dataset to study the expenditures  on housing and on utilities, fuels and public services. 	  
This survey contains   9986 households and  277 of them   with  extremely large or small expenditures   are removed in our analysis.
The maximum number of contact attempts the interviewers made in this survey is about 30. 
Figure \ref{fig:ce1} (a) shows the    cumulative response rate.
The overall response rate is about 0.6 and  80\% of the respondents completed the survey within the first five contact attempts.  
Let $(Y_1,Y_2)$ denote the logarithm of the  expenditure  on housing  and   on utilities, fuels and public services, respectively.
Figure \ref{fig:ce1} (b) shows  the   mean of  $(Y_1,Y_2)$ for respondents to each call attempt. 
The mean of $Y_1$ for respondents  gradually increases  with the number of contacts, 
and the mean of  $Y_2$      has a sharp increase in the second and third contacts and fluctuates in later  contacts.
These results suggest that the delayed  respondents are  likely to have higher expenditures and  the nonresponse is likely dependent on the expenditures,  in particular, on $Y_1$.

\begin{figure}[H]
\graphicspath{{Rscripts/}}
\centering
\subfigure[Cumulative response rate]{
\includegraphics[scale=0.45]{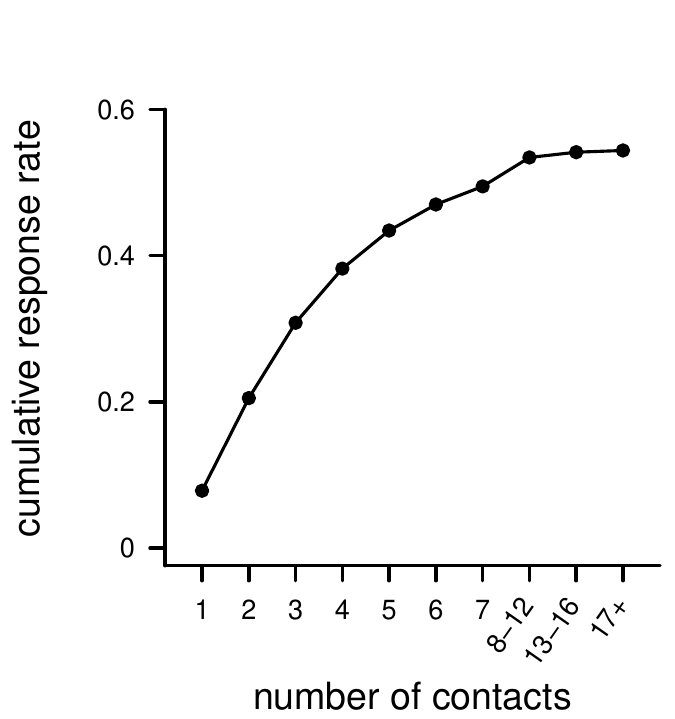}}
\hfill
\subfigure[Outcomes mean]{\includegraphics[scale=0.45]{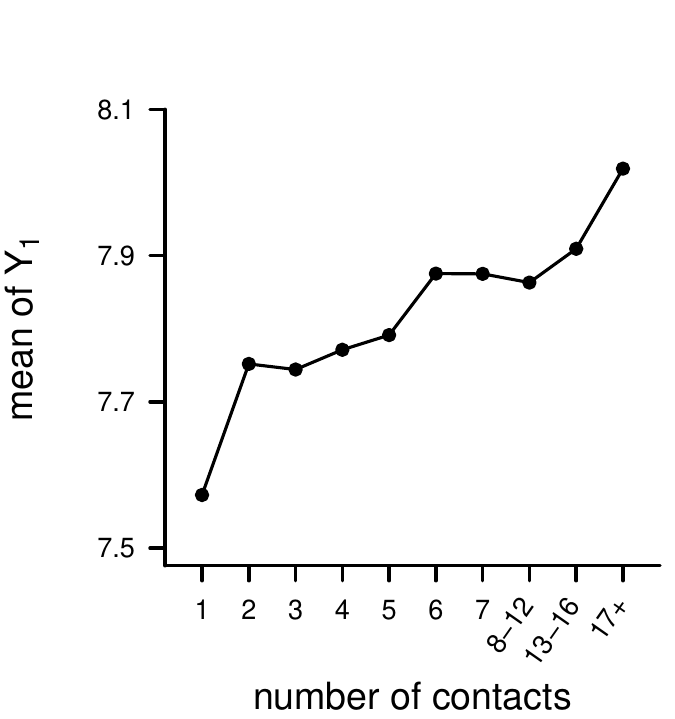}
\includegraphics[scale=0.45]{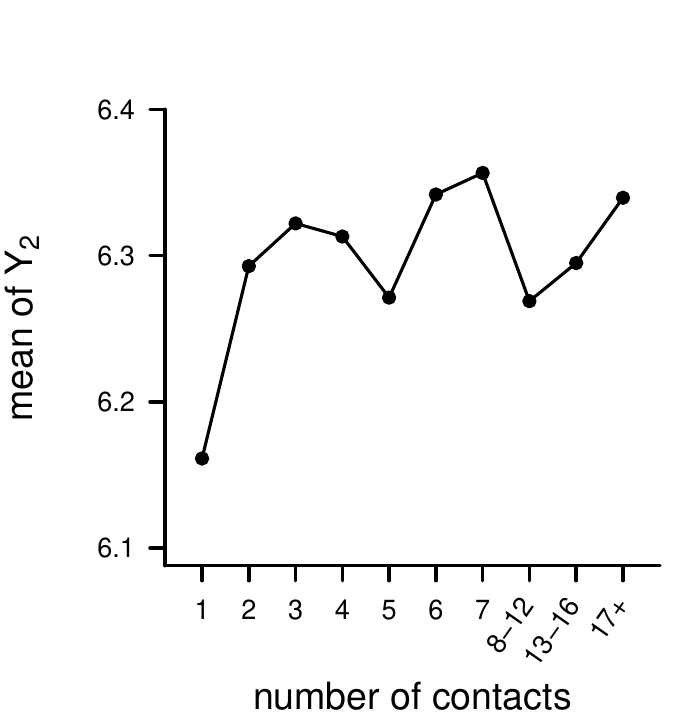}}
\caption{Response rates and outcomes mean for  respondents  in the CES application.}\label{fig:ce1}
\end{figure}

We apply the proposed   methods to estimate the    outcomes mean  $\mu=(\mu_1,\mu_2)=(E(Y_1),E(Y_2))$.
Analogous to previous survey studies \citep[e.g.,][]{qin2014semiparametric,boniface2017assessment}, 
we split the   contact attempts into two stages: 1--2 calls (early contact) and 3+ calls (late contact).  
Among the 9709 households we analyze, 1992 responded in the first  stage,
3287 responded later, and   4430 never responded.
We       fit the data  with working models  $\pi_1=\expit(\alpha_1 +\gamma_1 Y_1+\gamma_2 Y_2),\pi_2=\expit(\alpha_2 +\gamma_1 Y_1+\gamma_2 Y_2)$ and  $\pr_2(Y)\thicksim N(\mu,\Sigma)$, and apply the proposed IPW, REG, and DR methods to  estimate the outcomes mean.
The common odds ratio parameters $(\gamma_1,\gamma_2)$ reveal  that the resistance to respond caused by the outcomes  remain the same in these two  contact stages.
We also compute the  complete-case (CC) sample mean and apply   standard  inverse probability weighting, regression and doubly robust estimation methods that are based on MAR and include  the number of contacts   as a covariate, which are respectively denoted by IPW$_\mar$, REG$_\mar$, and DR$_\mar$.

Table \ref{tbl:ce} presents the     estimates. 
The DR point estimate of $\mu_1$ is   7.842 with 95\% confidence interval (7.800, 7.884),
and of $\mu_2$ is 6.345  with  (6.296, 6.393).
The CC estimate of $\mu_1$ is 7.756 (7.734, 7.778) and of $\mu_2$ is 6.285 (6.264, 6.306); the DR$_\mar$ estimate of $\mu_1$ is 7.767 (7.744, 7.790) and of $\mu_2$ is 6.287 (6.266, 6.309).
The     IPW and REG  methods produce estimates    close to DR;
however, the CC and standard MAR estimates  of the outcomes mean, in particular of $\mu_1$,   are well below the DR estimate.
As shown in      Figure \ref{fig:ce1} (b), this can be partially explained by the fact   that  the outcomes mean in early respondents  is  lower    than  in the delayed,    
and therefore,  the outcomes mean in  the  respondents  is   likely   lower than in the  nonrespondents.
The estimation results of the odds ratio parameters reinforce this conjecture:
the  IPW, REG, and DR estimates of the odds ratio parameters are all negative, 
suggesting that high-spending people are more reluctant to respond or more difficult to contact. 
The odds ratio estimate for expenditure on housing is statistically significant at  level 0.01,  
although it is not significant for  expenditure on utilities, fuels and public services. 
This is evidence for missingness not at random.
These results  indicate that   the expenditure on housing  play a more important role in  the response process;
this may be because the survey takes personal home visit   as one of the main modes  of interview and people 
with high expenditure on housing are more difficult to reach. 
Increasing the variety  of     interview modes may potentially alleviate such  nonignorable nonresponse.

\begin{table}[H]
\centering
%\footnotesize
\caption{Point estimate,   95\% confidence interval, and p-value for the CES application}  \label{tbl:ce}

\setlength\tabcolsep{4pt}
\begin{tabular}{lccccccccclcccc}
& \multicolumn{7}{c}{$\mu_1$} 
&& \multicolumn{3}{c}{$\gamma_1$}&\\ 
&IPW & REG & DR &  IPW$_\mar$ & REG$_\mar$ & DR$_\mar$  & CC 
&& IPW & REG & DR\\
Estimate &7.859 &7.861 &7.842 & 7.767& 7.769& 7.767 &7.756  
&& -0.269&  -0.258 & -0.238   \\ [6pt]
%SD&0.041 & 0.026 &   0.021 &   0.012
%&&0.100 &0.080&0.067  \\
\makecell[l]{CI or\\ p-value}& \makecell[c]{(7.785,\\ 7.932)} &\makecell[c]{(7.810,\\ 7.912)}&\makecell[c]{(7.800,\\ 7.884)} & \makecell[c]{(7.744,\\ 7.789)}
&\makecell[c]{(7.746,\\ 7.791)} &
\makecell[c]{(7.744,\\ 7.790)}
& \makecell[c]{(7.734,\\ 7.778)}
&&  0.004& 0.001 & 0\\ [10pt]
& \multicolumn{7}{c}{$\mu_2$} 
&& \multicolumn{3}{c}{$\gamma_2$}&\\ 
&IPW & REG & DR & IPW$_\mar$ & REG$_\mar$ & DR$_\mar$  & CC 
&& IPW & REG & DR\\
Estimate &6.346 &6.350 &6.345 &6.287  & 6.288 & 6.287 &6.285  
&& -0.028 &  -0.042 & -0.056   \\  [6pt]
%SD&0.042 &  0.027 &   0.025 &  0.011 
% && 0.121  &0.095&0.088  \\
\makecell[l]{CI or\\ p-value}&\makecell[c]{(6.269,\\ 6.423)}&\makecell[c]{(6.297,\\ 6.403)}&\makecell[c]{(6.296,\\ 6.393)}& \makecell[c]{(6.266 ,\\ 6.309)}
&\makecell[c]{(6.267,\\ 6.310)} &
\makecell[c]{(6.266,\\ 6.309)} &\makecell[c]{(6.264,\\ 6.306)}
&&  0.812  & 0.658 & 0.521            
\end{tabular}                         	
\end{table}

We   further conduct   sensitivity analysis  to  assess robustness of the above results      against   violation of the stableness of resistance assumption.
The sensitivity analysis shows that our    results are not sensitive to mild violations of the stableness of resistance assumption.
In particular, when the sensitivity parameter varies within a moderate range, 
the DR estimate of $\mu_1$ remains  larger than the complete-case (CC) sample mean and the estimates based on MAR, 
and the estimate of $\gamma_1$ remains significantly negative.
Such results  reinforce our finding that high-spending people are more reluctant to respond or more difficult to contact.
Details of the sensitivity analysis  are relegated to Section   S7 of  the supplement.

\section{Discussion}

We establish  a novel framework for nonresponse adjustment with callback data, which further illustrates the usefulness and extends the application of callback data.
Although not all surveys provide callback  data, their availability is increasing in modern surveys.
The stableness of resistance assumption is key to our framework.
Under this assumption, we establish nonparametric identification and propose a suite of novel estimators including a doubly robust one,
which  extend  previous parametric approaches and  further elucidate the underlying source for nonresponse adjustment with callback data.
We caution that the stableness of resistance assumption is untestable based on observed data.
Therefore, analogous to various missing data problems  \citep{molenberghs2008every,miao2016varieties,sun2018semiparametric}, 
its  validity should be justified based on domain-specific knowledge and  needs to be investigated on a case-by-case basis. 
We have clarified the motivation, implication and limitation of the stableness of resistance assumption to facilitate the justification in practice.
Even if the assumption does not hold, 
our approach constitutes a valid test of whether the missingness is entirely MAR--because the stableness of resistance assumption naturally holds under the null hypothesis of MAR.
Besides,   sensitivity analysis  is   warranted  to assess robustness of inference against violation of  the assumption.

We have employed an odds ratio parametrization and adopted practical parametric working models in the IPW, REG and DR estimators.
There exist other parametrizations and we describe estimation under an alternative parametrization  in Section S2 of the supplement.
\cite{kang2007demystifying} cautioned for  potentially disastrous bias of certain DR estimators  under MAR  when all parametric working models are incorrect.
However, previous authors have proposed alternative  constructions of nuisance estimators and DR estimators   to alleviate this problem,
see e.g. \cite{tan2010bounded,vermeulen2014biased,tsiatis2011improved} and the discussions alongside \cite{kang2007demystifying}.
Besides,  multiply robust  estimation in the sense of \cite{vansteelandt2007estimation} is also of interest.
In addition, our nonparametric identification  and estimation results open the way to more sophisticated estimation methods built on complex  models, 
such as series or sieve estimation.
In particular, our proposal in Theorem \ref{thm:dr:fl} allows for flexible estimation of the odds ratio function, without requiring it to be known or follow a parametric model,
which extends the methods for nonignorable missing data analysis.
%\red{We characterize how the convergence rates of estimated nuisance functions affect the asymptotic distribution of the influence-function-based estimator, and conduct simulations to see the performance of the estimator with sieve estimation for nuisance functions; see Section S3 and S5 in the supplement.}
For large to high-dimensional covariates,  a heuristic approach for variable selection   is to  include  penalties (e.g. LASSO)  into  the optimization of estimating equations for the nuisance parameters \citep[e.g.][]{garcia2010variable,fang2016model}. 
However, there remain   challenges to the variable selection in the presence of nonignorable missing data and callbacks.
It is of interest to incorporate these approaches to improve  the proposed estimation methods.

%We   assumed stableness of resistance for the first two calls, but identification remains if this     holds for any two adjacent calls by applying our identifying strategy to  the subsurvey starting with these two calls.
We   considered  a single action response process---a call attempt either succeeds or fails;
however,  there may exist several  dispositions,  e.g.,  interview, refusal, other non-response or final non-contact \citep{biemer2013using}.
Concurrent nonresponse  is often the case when the missingness is due to failure of contact,  
but in practice   different  frame variables may be  observed in different call attempts. 
In this case one may combine the callback design and the graphical model  \citep[e.g.,][]{sadinle2017itemwise,malinsky2020semiparametric,mohan2021graphical}   to account for complex patterns of missingness.
In addition to   nonresponse adjustment, 
callback data  are also useful  for  the design and organization of surveys, e.g., allocation of time and staff resources.
The integration of the callback design  and other tools (e.g., instrumental variables) may be useful for   handling simultaneous problems of   nonresponse  and confounding or other deficiencies in survey and observational studies.
It is of interest to pursue these extensions.

\section*{Acknowledgements}
We are grateful for   valuable comments from   the editor, the associate editor  and three anonymous reviewers.
This work is partially supported by National Key R\&D Program of China (2022YFA1008100) and National Natural Science Foundation of China (12292983, 12071015).

\section*{Supplementary material}
Supplementary materials online    include further illustration for the stableness of resistance assumption,
estimation under an alternative parametrization,
the efficient influence function under the semiparametric model that $A_1(X;\alpha_1),\Gamma(X,Y;\gamma)$ are correctly specified,
estimation with multiple callbacks,
proof of theorems, propositions, and important equations,
additional  simulations and real data analysis results,
and codes and data for reproducing the simulations and application.
%Codes for replicating the results are available at ???

%\spacingset{0.8}

\bibliographystyle{chicago}
%\bibliography{/Users/mwfy/Dropbox/Statistics/Bibliography/CausalMissing}
\bibliography{CausalMissing}

%
%\setcounter{condition}{0}
%\setcounter{proposition}{0}
%\setcounter{assumption}{0}
%\renewcommand {\theexample} {S.\arabic{example}}
%\renewcommand {\theproposition} {S.\arabic{proposition}}
%\renewcommand {\theassumption} {S.\arabic{assumption}}
%\renewcommand {\thetheorem} {S.\arabic{theorem}}
%\renewcommand {\theequation} {S.\arabic{equation}}
%\renewcommand {\thetable} {S.\arabic{table}}
%\renewcommand {\thefigure} {S.\arabic{figure}}
%\newcommand*{\dt}[1]{\accentset{\mbox{\Large\bfseries .}}{#1}}
%
%
%\makeatletter
%\renewcommand{\@seccntformat}[1]{{\csname the#1\endcsname}.\hspace*{1em}}
%\makeatother

%
%\newpage
%	
%	
%\begin{center}
%	\LARGE \bf 
% Online supplement to ``The role of callback in  survey data  for nonresponse adjustment"
%\end{center}

\end{document}